\title{Some Notes on Temporal Justification Logic}
\author{Samuel Bucheli}
\date{October 25, 2015}
\newcommand{\Prop}{\textsf{Prop}}
\newcommand{\Formulae}{\textsf{Fml}}
\newcommand{\lfalse}{\bot}
\newcommand{\ltrue}{\top}
\newcommand{\lneg}{\neg}
\newcommand{\propax}{\ensuremath{(\textsf{Prop})}}
\newcommand{\lrule}[2]{\displaystyle{\frac{#1}{#2}}}
\newcommand{\mprule}{\ensuremath{(\textsf{MP})}}
\newcommand{\limplies}{\rightarrow}
\newcommand{\liff}{\leftrightarrow}
\newcommand{\lnext}{\bigcirc}
\newcommand{\lalways}{\Box}
\newcommand{\leventually}{\Diamond}
\newcommand{\luntil}{{\,\mathcal{U}\,}}
\newcommand{\kax}{\ensuremath{\textsf{-k}}}
\newcommand{\nextkax}{\ensuremath{(\lnext\kax)}}
\newcommand{\alwayskax}{\ensuremath{(\lalways\kax)}}
\newcommand{\funax}{\ensuremath{(\textsf{fun})}}
\newcommand{\mixax}{\ensuremath{(\textsf{mix})}}
\newcommand{\indax}{\ensuremath{(\textsf{ind})}}
\newcommand{\uoneax}{\ensuremath{(\luntil\textsf{1})}}
\newcommand{\utwoax}{\ensuremath{(\luntil\textsf{2})}}
\newcommand{\necrule}{\ensuremath{\textsf{-nec}}}
\newcommand{\nextnecrule}{\ensuremath{(\lnext\necrule)}}
\newcommand{\alwaysnecrule}{\ensuremath{(\lalways\necrule)}}
\newcommand{\uindrule}{\ensuremath{(\luntil\textsf{-ind})}}
\newcommand{\LTL}{\textsf{LTL}}
\newcommand{\LTLalt}{\LTL^\text{alt}}
\newcommand{\JFiveLTL}{\textsf{J5LTL}}
\newcommand{\lknows}{\mathsf{K}}
\newcommand{\lconsiders}{\mathsf{L}}
\newcommand{\SFour}{\textsf{S4}}
\newcommand{\SFive}{\textsf{S5}}
\newcommand{\nlax}{\ensuremath{(\textsf{nl})}}
\newcommand{\prax}{\ensuremath{(\textsf{pr})}}
\newcommand{\nlsyncax}{\ensuremath{(\textsf{nlsync})}}
\newcommand{\prsyncax}{\ensuremath{(\textsf{prsync})}}
\newcommand{\notquiteprax}{\ensuremath{(\textsf{notquitepr})}}
\newcommand{\knowexchax}{\ensuremath{(\textsf{knowexch})}}
\newcommand{\CTerms}{\textsf{Const}}
\newcommand{\VTerms}{\textsf{Var}}
\newcommand{\Terms}{\textsf{Tm}}
\newcommand{\jbox}[1]{\left[#1\right]\!}
\newcommand{\tapp}{\cdot}
\newcommand{\tsum}{+}
\newcommand{\tinspect}{!}
\newcommand{\tneginspect}{?}
\newcommand{\tnext}{\Rrightarrow}
\newcommand{\tprev}{\Lleftarrow}
\newcommand{\talwaysaccess}{\Downarrow}
\newcommand{\tgeneralize}{\Uparrow}
\newcommand{\tnextaccess}{\downarrow}
\newcommand{\JFive}{\textsf{J5}}
\newcommand{\appax}{\ensuremath{(\textsf{application})}}
\newcommand{\sumax}{\ensuremath{(\textsf{sum})}}
\newcommand{\posintax}{\ensuremath{(\textsf{positive introspection})}}
\newcommand{\negintax}{\ensuremath{(\textsf{negative introspection})}}
\newcommand{\refax}{\ensuremath{(\textsf{reflexivity})}}
\newcommand{\constnecrule}{\ensuremath{(\textsf{const}\necrule)}}
\newcommand{\CS}{\mathcal{CS}}
\newcommand{\numberofagents}{h}
\newcommand{\agent}{i}
\newcommand{\taind}{\textsf{aind}}
\newcommand{\tahead}{\textsf{ahead}}
\newcommand{\tatail}{\textsf{atail}}
\newcommand{\tuhead}{\textsf{uhead}}
\newcommand{\tutail}{\textsf{utail}}
\newcommand{\tuappend}{\textsf{uappend}}
\newcommand{\nextappax}{\ensuremath{(\lnext\textsf{-application})}}
\newcommand{\alwaysappax}{\ensuremath{(\lalways\textsf{-application})}}
\newcommand{\aindax}{\ensuremath{(\lalways\textsf{-ind})}}
\newcommand{\amixax}{\ensuremath{(\lalways\textsf{-mix})}}
\newcommand{\uax}{\ensuremath{(\luntil)}}
\newcommand{\uindax}{\ensuremath{(\luntil\textsf{-ind})}}
\newcommand{\umixax}{\ensuremath{(\luntil\textsf{-mix})}}
\newcommand{\alwaysaccessprinciple}{\ensuremath{(\lalways\textsf{-access})}}
\newcommand{\generalizeprinciple}{\ensuremath{(\textsf{generalize})}}
\newcommand{\nextaccessprinciple}{\ensuremath{(\lnext\textsf{-access})}}
\newcommand{\nextrightshiftprinciple}{\ensuremath{(\lnext\textsf{-right})}}
\newcommand{\nextleftshiftprinciple}{\ensuremath{(\lnext\textsf{-left})}}
\newcommand{\localstates}{\mathbf{L}}
\newcommand{\runs}{\mathcal{R}}
\newcommand{\system}{\mathcal{I}}
\newcommand{\accrel}{\sim}
\newcommand{\evidence}{\mathcal{E}}
\newcommand{\valuation}{\nu}
\newcommand{\entails}{\vDash}
\newcommand{\proves}{\vdash}
\newcommand{\localstatesequence}{\text{LSS}}
\newcommand{\futurelocalstatesequence}{\text{FLSS}}
\newcommand{\N}{\mathbb{N}}
\newcommand{\powerset}{\mathcal{P}}
\newcounter{enumsave}
\newtheorem{theorem}{Theorem}
\newtheorem{lemma}{Lemma}
\newtheorem{corollary}{Corollary}
\theoremstyle{definition}
\newtheorem{definition}{Definition}
\theoremstyle{remark}
\newtheorem*{remark}{Remark}
\newtheorem{question}{Question}
\renewcommand{\phi}{\varphi}
\begin{document}
 \maketitle
 
 \begin{abstract}
 Justification logics are modal-like logics with the additional capability of recording the reason, or justification, for modalities in syntactic structures, called justification terms.
 Justification logics can be seen as explicit counterparts to modal logic.
 The behavior and interaction of agents in distributed system is often modeled using logics of knowledge and time.
 In this paper, we sketch some preliminary ideas on how the modal knowledge part of such logics of knowledge and time could be replaced with an appropriate justification logic. 
 \end{abstract}
 
 \section{Introduction}
 \label{sect:Introduction}

Justification logics~\cite{ArtFit11SEP} are epistemic logics that feature explicit reasons for an agent's knowledge and belief. 
Originally, Artemov developed justification logic to provide a constructive semantics for intuitionistic logic. 
Later this type of  logics was introduced into formal epistemology, where it provides a novel approach to several epistemic puzzles and problems of multi-agent systems~\cite{Art06TCS,Art08RSL,Art10LNCS,BucKuzStu11JANCL,BucKuzStu11WoLLIC,KuzStu12AiML,ArtKuz14APAL,BucKuzStu14,KMOZS15}. 
Instead of an implicit statement~$\Box \phi$, which stands for  \emph{the agent knows~$\phi$}, justification logics include explicit statements of the form~$\jbox{t} \phi$, which mean \emph{$t$~justifies the agent's knowledge of~$\phi$}. 

 A common approach to model distributed systems of interacting agents is using logics of knowledge and time, with the interplay between these two modalities leading to interesting properties and questions~\cite{FHMV95,vdMW03,HvdMV04}.
 While knowledge in such systems has typically been modeled using the modal logic $\SFive$, it is a natural question to ask what happens when we model knowledge in such logics using a justification logic.
 
 In the following, we will sketch some preliminary ideas towards such a logic, and indicate further necessary work with appropriate questions.
 After briefly introducing the syntax in Section~\ref{sect:Syntax}, we propose an axiomatization in Section~\ref{sect:Axioms}, including possible additional principles. The resulting logic is illustrated with the proof of some simple properties in Section~\ref{sect:SomeProperties}. Finally, we introduce interpreted systems as the chosen semantics in Section~\ref{sect:Semantics} and we show soundness in Section~\ref{sect:Soundness}, where the question of completeness is also briefly addressed. The paper concludes with additional questions and remarks regarding further directions of research in Section~\ref{sect:Conclusion}.
 
 \section{Syntax}
 \label{sect:Syntax}
 
  In the following, let $\numberofagents$ be a fixed number of agents, $\CTerms$ a given set of proof constants, $\VTerms$ a given set of proof variables, and $\Prop$ a given set of atomic propositions..
 
 The set of justification terms $\Terms_\agent$ for agent~$1\leq\agent\leq\numberofagents$ is defined inductively by
 \[
  t^\agent \coloncolonequals c^\agent \mid x^\agent \mid \; \tinspect t^\agent \mid \; \tneginspect t^\agent \mid t^\agent \tsum t^\agent \mid t^\agent \tapp t^\agent \, ,
 \]
 where $c^\agent \in \CTerms_\agent$ and $x^\agent \in \VTerms_\agent$.
 
 The set of formulae $\Formulae$ is inductively defined by
 \[
  \phi \coloncolonequals P \mid \lfalse \mid \phi \limplies \phi \mid \lnext \phi \mid \lalways \phi \mid \phi \luntil \phi \mid \jbox{t^\agent}_\agent\phi \, ,
 \]
 where $1\leq\agent\leq\numberofagents$, $t^\agent \in \Terms_\agent$  and $P \in \Prop$.
 
We use the following usual abbreviations
 \begin{align*}
  \lneg \phi &\colonequals \phi \limplies \lfalse \, ,\\
  \ltrue &\colonequals \lneg \lfalse \, ,\\
  \phi \lor \psi &\colonequals \lneg \phi \limplies \psi \, ,\\
  \phi \land \psi &\colonequals \lneg (\lneg \phi \lor \lneg \psi \, ,\\
  \phi \liff \psi &\colonequals (\phi \limplies \psi) \land (\psi \limplies \phi) \, ,\\
  \leventually \phi &\colonequals \lneg \lalways \lneg \phi \, . \\
 \end{align*}
 
 Associativity and precedence of connectives, as well as the corresponding omission of brackets, are handled in the usual manner.
 
 \section{Axioms}
 \label{sect:Axioms}
  
  The axiom system for temporal justification logic consists of three parts, namely propositional logic, temporal logic, and justification logic.
  
 \subsection{Propositional Logic}
 For propositional logic, we take
 \begin{enumerate}
  \setcounter{enumi}{\theenumsave}
  \item all propositional tautologies \hfill \propax
  \setcounter{enumsave}{\theenumi}
 \end{enumerate}
 as axioms and the rule modus ponens, as usual:
 \[
   \lrule{\phi \quad \phi \limplies \psi}{\psi}\,\mprule \, .
 \]
 
 \subsection{Temporal Logic}
 For the temporal part, we use~\cite{Gor99} (see also Section~\ref{sect:AnAlternativePresentationOfTemporalLogic}), with axioms
 \begin{enumerate}
  \setcounter{enumi}{\theenumsave}
  \item $\lnext( \phi \limplies \psi) \limplies (\lnext \phi \limplies \lnext \psi)$ \hfill \nextkax
  \item $\lalways( \phi \limplies \psi) \limplies (\lalways \phi \limplies \lalways \psi)$ \hfill \alwayskax
  \item $\lnext \lneg \phi \liff \lneg \lnext \phi$ \hfill \funax
  \item $\lalways \phi \limplies (\phi \land \lnext\lalways\phi)$\hfill \mixax
  \item $\lalways (\phi \limplies \lnext \phi) \limplies (\phi \limplies \lalways \phi)$ \hfill \indax
  \item $\phi \luntil \psi \limplies \leventually \psi$ \hfill \uoneax
  \item $\phi \luntil \psi \liff \psi \lor (\phi \land \lnext(\phi \luntil \psi))$ \hfill \utwoax
  \setcounter{enumsave}{\theenumi}
 \end{enumerate}
 and rules
 \[
  \lrule{\phi}{\lnext \phi}\,\nextnecrule \, , \qquad\qquad \lrule{\phi}{\lalways\phi}\,\alwaysnecrule \, .
 \]
 
 We use $\LTL$ to denote the Hilbert system given by the axioms and rules for temporal logic above, plus the axioms and rules for propositional logic. 
 
 \subsection{Justification Logic}
 Finally, for the justification logic, we use the counterpart to the multi-agent version of the modal logic $\SFive$, i.e., $\JFive^\numberofagents$ (cf.~\cite{Rub06LC}), with axioms
 \begin{enumerate}
  \setcounter{enumi}{\theenumsave}
  \item $\jbox{t}_\agent (\phi \limplies \psi) \limplies (\jbox{s}_\agent \phi \limplies \jbox{t \tapp s}_\agent \psi)$ \hfill \appax
  \item $\jbox{t}_\agent \phi \lor \jbox{t \tsum s}_\agent \phi$, $\jbox{s}_\agent \phi \limplies  \jbox{t \tsum s}_\agent \phi$ \hfill \sumax
  \item $\jbox{t}_\agent \phi \limplies \phi$ \hfill \refax
  \item $\jbox{t}_\agent \phi \limplies \jbox{\tinspect t}_\agent \jbox{t}_\agent \phi$ \hfill \posintax
  \item $\lnot \jbox{t}_\agent \phi \limplies \jbox{\tneginspect t}_\agent \lnot \jbox{t}_\agent \phi$ \hfill \negintax
  \setcounter{enumsave}{\theenumi}
 \end{enumerate}
 and rule
 \[
   \lrule{\jbox{c}_\agent \phi \in \CS}{\jbox{c}_\agent \phi}\, \constnecrule \, ,
 \]
 where the constant specification $\CS$ is a set of formulae $\jbox{c}_\agent \phi$, where $c \in \CTerms_\agent$ is a proof constant and $\phi$ is an axiom.
 
 We call a constant specification $\CS$ \emph{axiomatically appropriate}, if for every axiom~$\phi$ and agent~$\agent$, there is a constant $c \in \CTerms_\agent$ such that $\jbox{c}_\agent \phi \in \CS$.
 
 For a given constant specification $\CS$, we use $\JFiveLTL_\CS$ to denote the Hilbert system given by the axioms and rules for propositional logic, temporal logic, and justification logic as presented above.
 
 \begin{question}
  What can be done using the following variant of constant necessitation?
  \[
   \lrule{\jbox{c}_\agent \phi \in \CS}{\jbox{c}_\agent \lalways \phi}\
  \]
 \end{question}
 
 \subsection{Additional Principles}
 
 In $\JFiveLTL$, epistemic and temporal properties do not interact. 
 It is therefore a natural question to consider some of the following principles, which create a connection between time and knowledge.
 We assume the language for terms to be augmented in the obvious way.
 
 \begin{enumerate} 
  \item $\jbox{t}_\agent \lalways \phi \limplies \lalways \jbox{\talwaysaccess t}_\agent \phi$ \hfill \alwaysaccessprinciple
  \item $\lalways \jbox{t}_\agent \phi \limplies \jbox{\tgeneralize t}_\agent \lalways  \phi$ \hfill \generalizeprinciple
  \item $\jbox{t}_\agent \lalways \phi \limplies \jbox{\tnextaccess t}_\agent \lnext \phi$ \hfill \nextaccessprinciple
  \item $\jbox{t}_\agent \lnext \phi \limplies \lnext \jbox{\tnext t}_\agent \phi$ \hfill \nextrightshiftprinciple
  \item $\lnext \jbox{t}_\agent \phi \limplies \jbox{\tprev t}_\agent \lnext \phi$ \hfill \nextleftshiftprinciple
 \end{enumerate}
 
 \begin{remark}\mbox{}
  \begin{enumerate}
   \item This is very plausible, if you have evidence that something always is true, then at every point in time you should be able to access this information.
   \item Using evidence this seems more plausible than just using knowledge, as one requires the evidence to be the same at every point in time.
   \item This is similar to $\alwaysaccessprinciple$. One would expect it to be provable from $\alwaysaccessprinciple$ (see below).
   \item This seems plausible: agents do not forget evidence once they have gathered it and can ``take it with them''.
   \item This one seems seems less plausible, as it implies some form of premonition
  \end{enumerate}
 \end{remark}

 When writing $(\textsf{principles}) \proves_\CS \phi$ we mean $\phi$ is provable in $\JFiveLTL_\CS$ with the principles $(\textsf{principles})$ treated as additional axioms, i.e., in particular, this is also reflected by constant necessitation and constant specifications, respectively.  If the constant specification $\CS$ is  clear from the context or not relevant, we omit the corresponding subscript.

\begin{question}
What other principles connecting temporal and epistemic properties might be of interest? See also Section~\ref{sect:SomeConnectingPrinciplesInTemporalModalLogic}.
\end{question}
 
 \section{Some Properties}
 \label{sect:SomeProperties}
 
 In the following, we illustrate the logic by giving two simple (purely temporal) deductions, namely of $\lalways \phi \limplies \lnext \phi$ and $\lalways \phi \limplies \lalways \lalways \phi$, and then show how the connecting principles from the previous section link knowledge and time. We start with these two typical deductions  in $\LTL$.
  
 \begin{lemma}\mbox{}
  \begin{enumerate}
   \item $\proves \lalways \phi \limplies \lnext \phi$
    \item $\proves \lalways \phi \limplies \lalways \lalways \phi$
    \end{enumerate}
 \end{lemma}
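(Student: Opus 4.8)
The plan is to derive both claims purely inside $\LTL$, using only \mixax{}, \indax{}, \nextkax{}, the two necessitation rules \nextnecrule{} and \alwaysnecrule{}, and propositional reasoning (\propax{} with \mprule{}).

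For part~1, I would start from \mixax{}, that is $\lalways \phi \limplies (\phi \land \lnext\lalways\phi)$, and extract by a propositional tautology the two implications $\lalways \phi \limplies \phi$ and $\lalways \phi \limplies \lnext\lalways\phi$. Applying \nextnecrule{} to the first of these gives $\lnext(\lalways \phi \limplies \phi)$, and then \nextkax{} together with \mprule{} yields $\lnext\lalways\phi \limplies \lnext\phi$. Chaining this with $\lalways \phi \limplies \lnext\lalways\phi$ (again just propositional logic) gives $\lalways \phi \limplies \lnext\phi$, as desired.

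For part~2, the idea is to use \indax{} with the instance in which $\phi$ is replaced by $\lalways\phi$, which reads $\lalways(\lalways\phi \limplies \lnext\lalways\phi) \limplies (\lalways\phi \limplies \lalways\lalways\phi)$. Its antecedent is obtained by applying \alwaysnecrule{} to the theorem $\lalways\phi \limplies \lnext\lalways\phi$ already established in part~1. One application of \mprule{} then delivers $\lalways\phi \limplies \lalways\lalways\phi$.

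I do not expect any genuine obstacle here; these are textbook Hilbert-style derivations. The only thing to watch is the bookkeeping — keeping the propositional-tautology instances explicit and correctly pairing each use of \nextnecrule{} or \alwaysnecrule{} with the matching distribution axiom — and noting that, somewhat pleasantly, \alwayskax{} is not actually needed for either deduction.
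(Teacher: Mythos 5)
Your proposal is correct and follows essentially the same route as the paper: both parts split \mixax{} propositionally, use \nextnecrule{} plus \nextkax{} to get $\lnext\lalways\phi \limplies \lnext\phi$ for part~1, and instantiate \indax{} at $\lalways\phi$ with the antecedent supplied by \alwaysnecrule{} applied to $\lalways\phi \limplies \lnext\lalways\phi$ for part~2. Your closing observation that \alwayskax{} is never needed also matches the paper's derivation.
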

 \begin{proof}
  \begin{enumerate}
   \item From $\mixax$ and propositional reasoning we get 
   \begin{align}
\lalways \phi \limplies \phi \label{eq:mixax1}\\
\lalways \phi \limplies \lnext\lalways \phi \label{eq:mixax2}
   \end{align}
   From~\eqref{eq:mixax1} and $\nextnecrule$ we get
   \[
    \lnext(\lalways \phi \limplies \phi)
   \]
   which, in turn, using $\nextkax$ and propositional reasoning gives
   \[
    \lnext\lalways \phi \limplies \lnext \phi \, .
   \]
   By propositional reasoning and using~\eqref{eq:mixax1} we obtain the desired
   \[
    \lalways \phi \limplies \lnext \phi
   \]
   from this. 
   \item The following is a valid instance of $\indax$
   \[
    \lalways (\lalways \phi \limplies \lnext \lalways \phi) \limplies (\lalways \phi \limplies \lalways \lalways \phi) \, .
   \]
   Using~\eqref{eq:mixax2} from the first item, $\alwaysnecrule$, and modus ponens we immediately get the desired result.\qedhere
     \end{enumerate}
 \end{proof}
 
 Now, we show how $\nextaccessprinciple$ can be proved using $\alwaysaccessprinciple$ and $\nextleftshiftprinciple$.

 \begin{lemma}
  For every agent~$\agent$, formula~$\phi$ and term~$t$ there is a term~$s(t)$ such that 
         \[
\alwaysaccessprinciple, \nextleftshiftprinciple \proves \jbox{t}_\agent \lalways \phi \limplies \jbox{s(t)}_\agent \lnext \phi
         \]
 \end{lemma}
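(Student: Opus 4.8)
The plan is to push the hypothesis $\jbox{t}_\agent \lalways \phi$ through a short chain of three implications: one furnished by $\alwaysaccessprinciple$, one by the first item of the preceding lemma, and one by $\nextleftshiftprinciple$; the term $s(t)$ is then simply read off from what survives at the end of the chain.

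Concretely, first I would instantiate $\alwaysaccessprinciple$ to obtain
\[
  \jbox{t}_\agent \lalways \phi \limplies \lalways \jbox{\talwaysaccess t}_\agent \phi \, .
\]
Then I would apply the first item of the previous lemma, $\proves \lalways \psi \limplies \lnext \psi$, taking $\psi$ to be $\jbox{\talwaysaccess t}_\agent \phi$:
\[
  \lalways \jbox{\talwaysaccess t}_\agent \phi \limplies \lnext \jbox{\talwaysaccess t}_\agent \phi \, .
\]
Finally, I would use $\nextleftshiftprinciple$ with $\talwaysaccess t$ substituted for $t$:
\[
  \lnext \jbox{\talwaysaccess t}_\agent \phi \limplies \jbox{\tprev \talwaysaccess t}_\agent \lnext \phi \, .
\]
Transitivity of implication (propositional reasoning) then yields
\[
  \jbox{t}_\agent \lalways \phi \limplies \jbox{\tprev \talwaysaccess t}_\agent \lnext \phi \, ,
\]
so it suffices to set $s(t) \colonequals \tprev \talwaysaccess t$. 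In particular this confirms that, up to the choice of term, $\nextaccessprinciple$ is derivable from $\alwaysaccessprinciple$ and $\nextleftshiftprinciple$, as anticipated in the remark.

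There is no genuinely hard step here: the derivation is just three modus ponens steps chained propositionally. The two points that warrant a moment's attention are (i) that the instance $\lalways \jbox{\talwaysaccess t}_\agent \phi \limplies \lnext \jbox{\talwaysaccess t}_\agent \phi$ of the previous lemma is legitimate — which it is, since that lemma is a theorem schema holding for \emph{all} formulae, in particular for the justification formula $\jbox{\talwaysaccess t}_\agent \phi$ — and (ii) that $\tprev \talwaysaccess t$ is a well-formed term, which is guaranteed by the assumption that the term language has been augmented in the obvious way to include the operators $\talwaysaccess$ and $\tprev$ occurring in the additional principles. No constant specification plays any role, since $\alwaysaccessprinciple$ and $\nextleftshiftprinciple$ are invoked purely as axiom schemata.
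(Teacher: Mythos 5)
Your proof is correct and follows exactly the same route as the paper: chain $\alwaysaccessprinciple$, the instance $\lalways \jbox{\talwaysaccess t}_\agent \phi \limplies \lnext \jbox{\talwaysaccess t}_\agent \phi$ of the previous lemma, and $\nextleftshiftprinciple$, then read off $s(t) = \tprev\talwaysaccess t$. Your write-up is in fact slightly cleaner than the paper's, which drops the argument $t$ of $\talwaysaccess$ in a couple of places.
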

 
 \begin{proof}
 Using propositional logic, we can combine the two given principles 
    \begin{align*}
      \jbox{t}_\agent \lalways \phi &\limplies \lalways \jbox{\talwaysaccess}_\agent \phi \, , \\
      \lnext \jbox{\talwaysaccess}_\agent \phi &\limplies \jbox{\tprev \talwaysaccess t}_\agent \lnext \phi \, ,
   \end{align*}
   and the following instance of the principle proved in the first item of the previous lemma
   \begin{align*}
         \lalways \jbox{\talwaysaccess}_\agent \phi &\limplies \lnext \jbox{\talwaysaccess t}_\agent \phi
   \end{align*}
   in order to obtain
   \[
     \jbox{t}_\agent \lalways \phi \limplies \jbox{\tprev\talwaysaccess t}_\agent \lnext \phi \, ,
   \]
   and we are done.
 \end{proof}
 
 \begin{question}
   Can we prove $\nextaccessprinciple$ from $\alwaysaccessprinciple$ without using $\nextleftshiftprinciple$?
 \end{question}
 
 In contrast to the previous deductions, in the following we require our constant specifications to be axiomatically appropriate.

 \begin{lemma}
 Let $\CS$ be an axiomatically appropriate constant specification.
For every agent~$\agent$, formula~$\phi$ and term~$t$ there is a term~$s(t)$ such that 
         \[
\generalizeprinciple \proves_\CS \jbox{t}_\agent \lalways \phi \limplies \jbox{s(t)}_\agent \lalways\lalways \phi
         \]
 \end{lemma}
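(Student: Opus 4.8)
The plan is to internalize the \LTL{}-theorem $\lalways \phi \limplies \lalways\lalways\phi$ established above into a justification term, i.e.\ to produce a term $r = r(\phi)$ with $\proves_\CS \jbox{r}_\agent(\lalways \phi \limplies \lalways\lalways\phi)$, and then to read off $\proves_\CS \jbox{t}_\agent \lalways\phi \limplies \jbox{r \tapp t}_\agent \lalways\lalways\phi$ by one application of $\appax$ together with propositional reasoning; so $s(t) \colonequals r \tapp t$ will do.

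To build $r$, I would mirror the proof of $\lalways \phi \limplies \lalways\lalways\phi$. That proof derives $\lalways \phi \limplies \lnext\lalways\phi$ propositionally from the axiom $\mixax$, applies $\alwaysnecrule$ to obtain $\lalways(\lalways \phi \limplies \lnext\lalways\phi)$, and then applies modus ponens with the instance $\lalways(\lalways \phi \limplies \lnext\lalways\phi) \limplies (\lalways \phi \limplies \lalways\lalways\phi)$ of $\indax$. The steps that do not involve $\alwaysnecrule$ lift in the standard way: since $\CS$ is axiomatically appropriate, $\constnecrule$ applies to the axiom instances used, and repeated $\appax$ together with propositional reasoning gives a term $u = u(\phi)$ with $\proves_\CS \jbox{u}_\agent(\lalways \phi \limplies \lnext\lalways\phi)$, as well as a constant $c \in \CTerms_\agent$ with $\proves_\CS \jbox{c}_\agent\bigl(\lalways(\lalways \phi \limplies \lnext\lalways\phi) \limplies (\lalways \phi \limplies \lalways\lalways\phi)\bigr)$.

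The only step that does not internalize within $\JFiveLTL_\CS$ on its own is the use of $\alwaysnecrule$, and this is precisely where $\generalizeprinciple$ enters: applying $\alwaysnecrule$ to the \emph{theorem} $\jbox{u}_\agent(\lalways \phi \limplies \lnext\lalways\phi)$ gives $\proves_\CS \lalways\jbox{u}_\agent(\lalways \phi \limplies \lnext\lalways\phi)$, and then $\generalizeprinciple$ yields $\proves_\CS \jbox{\tgeneralize u}_\agent \lalways(\lalways \phi \limplies \lnext\lalways\phi)$. Feeding this into the justified $\indax$ instance via $\appax$ and propositional reasoning gives $\proves_\CS \jbox{c \tapp \tgeneralize u}_\agent(\lalways \phi \limplies \lalways\lalways\phi)$, so we may take $r \colonequals c \tapp \tgeneralize u$; one further application of $\appax$ then yields $\proves_\CS \jbox{t}_\agent \lalways\phi \limplies \jbox{(c \tapp \tgeneralize u) \tapp t}_\agent \lalways\lalways\phi$, i.e.\ $s(t) \colonequals (c \tapp \tgeneralize u) \tapp t$. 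I expect the main obstacle to be noticing that the temporal necessitation step in the \LTL{}-derivation is what blocks a naive internalization; once the already-justified theorem is prefixed with $\lalways$ via $\alwaysnecrule$ and the box is moved inward with $\generalizeprinciple$, everything else is routine lifting bookkeeping.
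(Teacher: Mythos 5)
Your proposal is correct and matches the paper's own proof essentially step for step: the paper likewise internalizes the propositional consequence $\lalways \phi \limplies \lnext\lalways\phi$ of $\mixax$ into a term (its $c_2 \tapp c_3$, your $u$), applies $\alwaysnecrule$ followed by $\generalizeprinciple$ to push the justification under $\lalways$, and then combines with a justified $\indax$ instance via $\appax$, arriving at the same witness $s(t) = (c_1 \tapp \tgeneralize(c_2 \tapp c_3)) \tapp t$. Your identification of the $\alwaysnecrule$ step as the one place where $\generalizeprinciple$ is genuinely needed is exactly the point the paper is making.
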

 \begin{proof}
 From $\constnecrule$ we get
\begin{align*}
&\jbox{c_3} \left( ( \lalways \phi_1 \limplies \phi \land \lnext \lalways \phi) \limplies (\lalways \phi \limplies \lnext \lalways \phi) \right) \, , \\
&\jbox{c_2} ( \lalways \phi_1 \limplies \phi \land \lnext \lalways \phi) \, , \\
&\jbox{c_1} \left( \lalways(\lalways \phi \limplies \lnext \lalways \phi) \limplies (\lalways \phi \limplies \lalways \lalways \phi) \right ) \, ,
\end{align*}
  as these are valid instances of a propositional axiom, $\mixax$, and $\indax$, respectively.
  
  Using $\appax$ and modus ponens, we obtain
  \[
   \jbox{c_2 \tapp c_3} (\lalways \phi \limplies \lnext \lalways \phi) \, .
  \]
  Using $\alwaysnecrule$, this gives
  \[
   \lalways \jbox{c_2 \tapp c_3} (\lalways \phi \limplies \lnext \lalways \phi) \, .
  \]
  From \generalizeprinciple, $\appax$, and modus ponens, we obtain
  \[
   \jbox{\tgeneralize(c_2 \tapp c_3)} \lalways (\lalways \phi \limplies \lnext \lalways \phi) \, .
  \]
  Using $\appax$ and modus ponens two more times, we finally get
  \[
   \jbox{t} \lalways \phi \limplies \jbox{(c_1 \tapp \tgeneralize(c_2 \tapp c_3)) \tapp t} \lalways \lalways \phi \, .
  \]
 \end{proof}
 
 Note that the previous proof  relies on the fact that agents can ``internalize'' deductions.
 This so-called internalization theorem holds in general and is a typical and fundamental property of justification logics.
 
 \begin{theorem}[Internalization]
 Let $\CS$ be an axiomatically appropriate constant specification.  If 
  \[
\generalizeprinciple,\nextaccessprinciple \proves_\CS \phi \, ,   
  \]
 then, for every $1 \leq \agent \leq \numberofagents$ there is a term $t_\agent$ such that 
 \[
\generalizeprinciple, \nextaccessprinciple \proves_\CS \jbox{t}_\agent \phi \, .  
 \]
  
 \end{theorem}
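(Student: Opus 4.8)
The plan is to prove the Internalization theorem by induction on the length of a derivation of $\phi$ in $\JFiveLTL_\CS$ augmented with the principles $\generalizeprinciple$ and $\nextaccessprinciple$. For each formula $\psi$ appearing in the derivation, I would construct a term $t_\agent$ (depending on $\psi$ and $\agent$) such that $\jbox{t_\agent}_\agent \psi$ is derivable; the final term $t$ for $\phi$ is then read off from the last line. The cases to handle are: (i) $\psi$ is an axiom (propositional, temporal, justification, or one of the two connecting principles), (ii) $\psi$ follows by modus ponens, (iii) $\psi$ follows by $\nextnecrule$ or $\alwaysnecrule$, and (iv) $\psi$ follows by $\constnecrule$.

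For the axiom case, axiomatic appropriateness of $\CS$ is exactly what is needed: for each axiom $\psi$ and each agent $\agent$ there is a constant $c \in \CTerms_\agent$ with $\jbox{c}_\agent \psi \in \CS$, so $\constnecrule$ yields $\jbox{c}_\agent \psi$ and we take $t_\agent = c$. For modus ponens, if $\psi$ is inferred from $\chi$ and $\chi \limplies \psi$ for which we already have terms $s_\agent$ and $r_\agent$ with $\jbox{s_\agent}_\agent \chi$ and $\jbox{r_\agent}_\agent (\chi \limplies \psi)$, then $\appax$ and modus ponens give $\jbox{r_\agent \tapp s_\agent}_\agent \psi$, so $t_\agent = r_\agent \tapp s_\agent$. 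The $\constnecrule$ case is handled similarly to modus ponens but using $\posintax$: if $\psi$ is $\jbox{c}_\agent \chi$ obtained because $\jbox{c}_\agent \chi \in \CS$, then from $\jbox{c}_\agent \chi$ (already derivable) and $\posintax$ we get $\jbox{\tinspect c}_\agent \jbox{c}_\agent \chi$; for agents $j \neq \agent$ we instead note $\jbox{c}_\agent \chi$ is itself a theorem and apply the axiom case uniformly, or — more cleanly — observe that $\jbox{c}_\agent\chi$ is provable and this provability is itself a (single-step) derivation, so we can recurse. (The cleanest formulation: treat every line as "provable" and recurse on its derivation; the $\constnecrule$ line $\jbox{c}_\agent\chi$ then has the one-line subderivation consisting of itself, and $\posintax$ internalizes it for agent $\agent$, while for $j\ne\agent$ we use that $\jbox{c}_\agent\chi$ is derivable from an instance of the $\constnecrule$ rule, hence we can apply the inductive construction to that derivation.)

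The main obstacle is the two necessitation rules $\nextnecrule$ and $\alwaysnecrule$, because standard internalization proofs for justification logic do not involve temporal operators. If $\psi = \lnext \chi$ is obtained from $\chi$ by $\nextnecrule$, I have a term $s_\agent$ with $\jbox{s_\agent}_\agent \chi$, but I need a term for $\lnext \chi$. Here the connecting principle $\nextaccessprinciple$, namely $\jbox{r}_\agent \lalways \xi \limplies \jbox{\tnextaccess r}_\agent \lnext \xi$, is the relevant tool, but it requires a box around $\lalways \xi$, not around $\xi$ itself. So the route is: from $\jbox{s_\agent}_\agent \chi$, first obtain $\jbox{s'_\agent}_\agent \lalways \chi$ — this is precisely what the proof technique of the preceding Lemma (the $\generalizeprinciple$ lemma) accomplishes, producing from $\jbox{s_\agent}_\agent\chi$ a term internalizing $\lalways\chi$, using $\generalizeprinciple$ together with $\alwaysnecrule$ applied at the term level and $\appax$; one needs $\lalways \chi$ to be a theorem, which follows since $\chi$ is a theorem and $\alwaysnecrule$ is available. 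Then $\nextaccessprinciple$ applied to $\jbox{s'_\agent}_\agent \lalways\chi$ gives $\jbox{\tnextaccess s'_\agent}_\agent \lnext\chi$ together with the observation $\lalways\chi \limplies \lnext\chi$ — wait, more directly $\nextaccessprinciple$ instantiated at $\xi := \chi$ yields $\jbox{\tnextaccess s'_\agent}_\agent \lnext \chi$ after $\appax$ and modus ponens, so $t_\agent = \tnextaccess s'_\agent$. The case $\psi = \lalways \chi$ by $\alwaysnecrule$ is handled by the same "generalize" construction directly, without needing $\nextaccessprinciple$.

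I expect the bookkeeping around how a single "generalize $\lalways$" step is itself realized at the term level to be the fiddly part: one must show that the construction in the $\generalizeprinciple$-lemma — which there is stated for a hypothesis of the form $\jbox{t}_\agent\lalways\phi$ — adapts to produce, from any theorem $\chi$ with internalizing term $s_\agent$, a term internalizing $\lalways\chi$. This works because $\chi$ being a theorem means $\lalways\chi$ is a theorem (by $\alwaysnecrule$), and then the same chain of $\appax$-applications used in that lemma, applied with the roles shifted, produces the required term; alternatively one can internalize the derivation of $\lalways\chi$ from $\chi$ directly by the inductive hypothesis, since that derivation is shorter in the relevant well-founded ordering if we set things up to recurse on subderivations. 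Either way, once the "box distributes over $\lalways$ at the cost of changing the term" lemma is in hand, all four rule cases go through and the theorem follows.
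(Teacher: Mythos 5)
Your proposal follows essentially the same route as the paper's proof: induction on the derivation, constants via \constnecrule{} for axioms and axiomatic appropriateness, \appax{} for modus ponens, \posintax{} for the \constnecrule{} case, and---the key step---handling \alwaysnecrule{} by applying \alwaysnecrule{} to $\jbox{s}_\agent\psi$ and then \generalizeprinciple{}, with \nextnecrule{} obtained from the resulting $\jbox{\tgeneralize s}_\agent\lalways\psi$ via \nextaccessprinciple{} and modus ponens. The only divergence is your (legitimate) worry about internalizing $\jbox{c}_j\psi$ for agents $\agent\neq j$ in the \constnecrule{} case, which the paper's proof silently ignores and which your proposed recursion on the one-line subderivation does not actually resolve; apart from that the two arguments coincide.
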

 \begin{proof}
 We proceed by induction on the derivation of $\phi$.
 
 In case $\phi$ is an axiom, the claim is immediate by $\constnecrule$.
 
 In case $\phi$ was derived by modus ponens from $\psi \limplies \phi$ and $\psi$, then, by induction hypothesis, there are term $s_1$ and $s_2$ such that $\jbox{s_1}_\agent (\psi \limplies \phi)$ and $\jbox{s_2} \psi$ are provable.
 Using $\appax$ and modus ponens, we obtain $\jbox{s_1 \tapp s_2}_\agent \phi$.
 
 In case $\phi$ is $\jbox{c}_\agent \psi$, derived using $\constnecrule$, we can use $\posintax$ and modus ponens in order to obtain
 \[
  \jbox{\tinspect c}_\agent \jbox{c}_\agent \psi \, .
 \]
 
 In case $\phi$ is $\lalways \psi$, derived using $\alwaysnecrule$, then, by induction hypothesis, there is a term $s$ such that $\jbox{s}_\agent \psi$ is provable.
 Now, we can use $\alwaysnecrule$ in order to obtain $\lalways \jbox{s}_\agent \psi$ and then $\generalizeprinciple$ and modus ponens to get
 \[
  \jbox{\tgeneralize s}_\agent \lalways \psi \, .
 \]

 Finally, if $\phi$ is $\lnext \psi$, derived using $\nextnecrule$, then, as above, we obtain $\jbox{\tgeneralize s}_\agent \lalways \psi$ and then use $\nextaccessprinciple$ and modus ponens to get
 \[
  \jbox{\tnextaccess \tgeneralize s}_\agent \lnext \psi \, .
 \] 
 \end{proof}
 
 \begin{corollary}
 Let $\CS$ be an axiomatically appropriate constant specification.  If 
  \[
\generalizeprinciple, \alwaysaccessprinciple, \nextleftshiftprinciple \proves_\CS \phi \, ,   
  \]
 then, for every $1 \leq \agent \leq \numberofagents$ there is a term $t_\agent$ such that 
 \[
\generalizeprinciple, \alwaysaccessprinciple, \nextleftshiftprinciple \proves_\CS \jbox{t}_\agent \phi \, .  
 \]
 \end{corollary}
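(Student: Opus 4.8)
The plan is to prove the corollary by essentially re-running the proof of the Internalization Theorem, but for the principle set $\{\generalizeprinciple,\alwaysaccessprinciple,\nextleftshiftprinciple\}$ instead of $\{\generalizeprinciple,\nextaccessprinciple\}$. The first thing to notice is that one cannot simply quote the Internalization Theorem: in that theorem $\nextaccessprinciple$ is a \emph{primitive} axiom, so its instances receive proof constants through the axiomatically appropriate $\CS$ and can be introduced by $\constnecrule$; here $\nextaccessprinciple$ is not available as an axiom, only in the weaker, \emph{derived} form supplied by the second lemma of this section. Hence the induction on the derivation of $\phi$ has to be carried out anew, and the only case that genuinely changes is the one for $\nextnecrule$.

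For the base case and for the cases of $\mprule$, $\constnecrule$, and $\alwaysnecrule$, I would copy the corresponding cases from the proof of the Internalization Theorem verbatim. The only difference is that the axiom base now also contains the instances of $\alwaysaccessprinciple$ and $\nextleftshiftprinciple$; this is immaterial, since these are again covered by $\constnecrule$ under an axiomatically appropriate $\CS$. Concretely: if $\phi$ is an axiom, apply $\constnecrule$; if $\phi$ comes from $\psi\limplies\phi$ and $\psi$ by $\mprule$, combine the witnesses $s_1,s_2$ from the induction hypothesis via $\appax$ and modus ponens to get $\jbox{s_1\tapp s_2}_\agent\phi$; if $\phi=\jbox{c}_\agent\psi$ comes from $\constnecrule$, use $\posintax$ and modus ponens to get $\jbox{\tinspect c}_\agent\jbox{c}_\agent\psi$; if $\phi=\lalways\psi$ comes from $\alwaysnecrule$ applied to $\psi$, take the witness $s$ for $\psi$, apply $\alwaysnecrule$ to obtain $\lalways\jbox{s}_\agent\psi$, and then $\generalizeprinciple$ with modus ponens to get $\jbox{\tgeneralize s}_\agent\lalways\psi$.

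The case requiring a change is when $\phi=\lnext\psi$ is obtained from $\psi$ by $\nextnecrule$. As in the $\alwaysnecrule$ case above, the induction hypothesis gives a term $s$ with $\jbox{s}_\agent\psi$ provable, from which $\alwaysnecrule$, $\generalizeprinciple$, and modus ponens yield $\jbox{\tgeneralize s}_\agent\lalways\psi$. At this point, instead of appealing to $\nextaccessprinciple$ as an axiom, I would invoke the second lemma of this section with the formula $\psi$ and the term $\tgeneralize s$: it produces a term $s(\tgeneralize s)$ together with a derivation of $\jbox{\tgeneralize s}_\agent\lalways\psi\limplies\jbox{s(\tgeneralize s)}_\agent\lnext\psi$ from $\alwaysaccessprinciple$ and $\nextleftshiftprinciple$ alone. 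One application of modus ponens then gives $\jbox{s(\tgeneralize s)}_\agent\lnext\psi$, i.e.\ $\jbox{t_\agent}_\agent\phi$ with $t_\agent\colonequals s(\tgeneralize s)$.

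I do not expect a serious obstacle here; the only delicate points are bookkeeping ones. One must check that the appeal to the earlier lemma is legitimate inside the current system — that its derivation really uses only $\alwaysaccessprinciple$, $\nextleftshiftprinciple$, and $\JFiveLTL_\CS$, all of which are present — and that the term $s(\tgeneralize s)$ lives in the suitably augmented term language. Everything else is a routine transcription of the Internalization proof.
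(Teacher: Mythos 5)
Your proposal is correct and follows what is evidently the intended argument: the paper states this corollary without proof, immediately after the Internalization Theorem and the lemma deriving $\nextaccessprinciple$ from $\alwaysaccessprinciple$ and $\nextleftshiftprinciple$, and the expected justification is exactly your re-run of the internalization induction with the $\nextnecrule$ case discharged via that lemma's schematic implication $\jbox{t}_\agent\lalways\psi \limplies \jbox{\tprev\talwaysaccess t}_\agent\lnext\psi$ in place of the axiom $\nextaccessprinciple$. You are also right that one cannot simply cite the theorem as a black box, since the two principle sets are incomparable; your bookkeeping remarks (the lemma uses only $\alwaysaccessprinciple$, $\nextleftshiftprinciple$, and pure $\LTL$ reasoning, and the axiomatically appropriate $\CS$ covers the new principle instances under $\constnecrule$) are the only points that needed checking.
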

 
 \begin{question}
  Is internalization provable without these additional principles?
 \end{question}

 \section{Semantics}
 \label{sect:Semantics}
 
 Let $\localstates$ be some set of local states.
 A \emph{global state} is a $(\numberofagents+1)$-tuple $\langle l_e, l_1, \dots, l_\numberofagents \rangle \in \localstates^{\numberofagents+1}$.
 A \emph{run}~$r$ is a function from $\N$ to global states, i.e.,  $r: \N \to \localstates^{\numberofagents+1}$.
 Given a run $r$ and $n \in \N$, the global state $(r,n)$ is called \emph{a point}.
 A \emph{system} is a set $\runs$ of runs.

 Let $\CS$ be a constant specification. An \emph{interpreted system}~$\system$ for $\CS$ is a tuple $(\runs, \evidence, \valuation)$ where 
 \begin{itemize}
  \item $\runs$ is a system,
  \item $\evidence_\agent: \Terms_\agent \times \runs \times \N \to \powerset(\Formulae)$ is an $\CS$-admissible evidence function for each $1 \leq \agent \leq \numberofagents$,
  \item $\valuation: \runs \times \N \to \powerset(\Prop)$ is a valuation.
 \end{itemize}
 
 Given two points $(r, n)$ and $(r^\prime, n^\prime)$, we define $(r,n) \accrel_\agent (r^\prime, n^\prime)$ by
 \[
 r(n) =\langle l_e, l_1, \dots, l_\numberofagents \rangle, \, r^\prime(n^\prime) = \langle l^\prime_e, l^\prime_1, \dots, l^\prime_\numberofagents \rangle, \text{ and } l_\agent = l^\prime_\agent \, .  
 \]
 
 A \emph{$\CS$-admissible evidence function} $\evidence_\agent$ is a function satisfying the following conditions.
 For all terms $t,s \in \Terms$ and all points $(r,n)$ and $(r^\prime, n^\prime)$,
 \begin{enumerate}
  \item $\evidence_\agent(r,n,t) \subseteq \evidence_\agent(r^\prime, n^\prime, t)$, whenever $(r,n) \accrel_\agent (r^\prime, n^\prime)$\hfill (montonicity),
  \item if $\jbox{c}_\agent \phi \in \CS$ then $\phi \in \evidence_\agent(r,n,c)$ \hfill (constant specification)
  \item if $\phi \limplies \psi \in \evidence_\agent(r,n,t)$ and $\phi \in \evidence_\agent(r,n,s)$,\\ \phantom{x}\qquad then $\psi \in \evidence_\agent(r,n, t \tapp s)$ \hfill (application)
  \item $\evidence_\agent(r,n,s) \cup \evidence_\agent(r,n,t) \subseteq \evidence_\agent(r,n,s \tsum t)$ \hfill (sum)
  \item if $\phi \in \evidence_\agent(r,n,t)$, then $\jbox{t}_\agent \phi \in \evidence_\agent(r,n,\tinspect t)$ \hfill (positive introspection)
  \item if $\phi \not\in \evidence_\agent(r,n,t)$, then $\neg \jbox{t}_\agent \phi \in \evidence_\agent(r,n,\tneginspect t)$ \hfill (negative introspection)
 \end{enumerate}

 Given an interpreted system $\system=(\runs,\evidence,\valuation)$ for $\CS$, a run $r \in \runs$ and $n \in \N$, we define validity of a formula $\phi$ in $\system$ at point $(r,n)$ inductively by
 \begin{align*}
  (\system, r, n) &\entails P \text{ iff } P \in \valuation(r,n) \, ,\\
  (\system, r, n) &\not\entails \lfalse \, ,\\
  (\system, r, n) &\entails \phi \limplies \psi \text{ iff } (\system, r, n) \not\entails \phi \text{ or } (\system, r, n) \entails \psi \, ,\\
  (\system, r, n) &\entails \lnext \phi \text{ iff } (\system, r, n+1) \entails \phi \, ,\\
  (\system, r, n) &\entails \lalways \phi \text{ iff } (\system, r, n+i) \entails \phi \text{ for all } i\geq 0 \, ,\\
  (\system, r, n) &\entails \phi \luntil \psi \text{ iff there is some } m \geq 0 \text{ such that } (\system, r, n+m) \entails \psi \\ & \qquad\qquad \text{ and } (\system, r, n+i) \entails \phi \text{ for all } 0 \leq i \leq m \, ,\\
  (\system, r, n) &\entails \jbox{t}_\agent \phi \text{ iff }  \phi \in \evidence_\agent(t,r,n)  \text { and } (\system, r^\prime, n^\prime) \entails \phi \\ &\qquad\qquad \text{ for all } r^\prime \in \runs \text{ and } n^\prime \in \N \text{ such that } (r,n) \accrel_\agent (r^\prime,n^\prime) \, .
 \end{align*}
 
 We call an interpreted system \emph{strong} if it has the following additional property
 \begin{itemize}
   \item if $\phi \in \evidence_\agent(r,n,t)$, then $(\system, r, n) \entails \jbox{t}_\agent \phi$ \hfill (strong evidence).
 \end{itemize}
 
 As usual, we write $\system \entails \phi$ if $(\system, r, n) \entails \phi$ for all points $(r,n)$, and we write $\entails_\CS \phi$ if $\system \entails \phi$ for all strong interpreted systems $\system$ for $\CS$.
 
\begin{question}
While in modal epistemic logic, $\SFive$ is typically used, i.e., the accessibility relation is usually an equivalence relation (as $\accrel_\agent$ is here), justification logic is more at home with the justification counterpart to $\SFour$. In order to achieve this, would it be possible to extend interpreted systems with additional, explicit accessibility relations
  \[
   R_\agent \subseteq  \localstates \times \localstates \, ,
  \]
  which are transitive and reflexive? In particular, this would allow dropping the strong evidence requirement.
\end{question}
 
 \section{Soundness}
 \label{sect:Soundness}
 
 \begin{theorem}[Soundness]
  Let $\CS$ be a constant specification.
   \[
   \text{If } \proves_\CS \phi \text{, then }\entails_\CS \phi \, .
   \]
 \end{theorem}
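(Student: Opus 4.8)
The plan is to argue by induction on the length of a derivation of $\phi$ in $\JFiveLTL_\CS$, showing that every axiom is valid at every point of every strong interpreted system for $\CS$, and that each inference rule preserves this notion of validity. Since $\entails_\CS \phi$ unpacks to ``$(\system,r,n) \entails \phi$ for every point $(r,n)$ of every strong interpreted system $\system$ for $\CS$'', establishing these base cases and closure steps suffices.

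For the easy base cases, the propositional tautologies \propax{} are validated directly by the truth clauses for $\limplies$ and $\lfalse$. The temporal axioms \nextkax, \alwayskax, \funax, \mixax, \indax, \uoneax, \utwoax{} are the standard linear-time validities over the flow of time $\N$; their verification is routine from the truth clauses for $\lnext$, $\lalways$, $\luntil$, with \funax{} using that every point $(r,n)$ has the unique successor $(r,n+1)$ and \indax{} using ordinary induction over $\N$. For the justification axioms, \appax{} and \sumax{} follow by combining the application and sum closure conditions on the evidence function (which discharge the ``$\phi \in \evidence_\agent(\cdot,r,n)$'' conjunct) with the corresponding argument at all $\accrel_\agent$-accessible points, and \refax{} is immediate from reflexivity of $\accrel_\agent$ (which holds since $l_\agent = l_\agent$).

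The delicate base cases are the two introspection axioms. For \posintax, from $(\system,r,n) \entails \jbox{t}_\agent \phi$ we get $\phi \in \evidence_\agent(t,r,n)$, hence $\jbox{t}_\agent \phi \in \evidence_\agent(\tinspect t,r,n)$ by the positive-introspection closure condition; and for every $(r',n')$ with $(r,n) \accrel_\agent (r',n')$ we must still check $(\system,r',n') \entails \jbox{t}_\agent \phi$, which needs $\phi \in \evidence_\agent(t,r',n')$ (by monotonicity) together with $(\system,r'',n'') \entails \phi$ for all $(r'',n'')$ with $(r',n') \accrel_\agent (r'',n'')$ (by transitivity of $\accrel_\agent$). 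For \negintax{} the strong-evidence property is essential: assume $(\system,r,n) \not\entails \jbox{t}_\agent \phi$; then, contrapositively, strong evidence yields $\phi \notin \evidence_\agent(t,r,n)$, so the negative-introspection closure condition gives $\lneg\jbox{t}_\agent \phi \in \evidence_\agent(\tneginspect t,r,n)$, and for any $\accrel_\agent$-accessible $(r',n')$ one notes that $(\system,r',n') \entails \jbox{t}_\agent \phi$ would force $\phi \in \evidence_\agent(t,r',n')$ and hence, by monotonicity along $(r',n') \accrel_\agent (r,n)$ (symmetry), $\phi \in \evidence_\agent(t,r,n)$, a contradiction; thus $(\system,r,n) \entails \jbox{\tneginspect t}_\agent \lneg\jbox{t}_\agent \phi$.

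For the inductive steps, \mprule{} is immediate from the clause for $\limplies$; \nextnecrule{} and \alwaysnecrule{} preserve validity because $\entails_\CS \phi$ already asserts truth of $\phi$ at \emph{every} point, so in particular at all successors $(r,n+i)$; and \constnecrule{} is handled by the constant-specification closure condition (if $\jbox{c}_\agent \phi \in \CS$ then $\phi \in \evidence_\agent(c,r,n)$ at every point) together with strong evidence, giving $(\system,r,n) \entails \jbox{c}_\agent \phi$ everywhere. I expect the main obstacle to be \negintax: it is the only point that genuinely forces the strong-evidence assumption — without it, $(\system,r,n) \not\entails \jbox{t}_\agent \phi$ could hold while $\phi \in \evidence_\agent(t,r,n)$, blocking the closure condition — and it simultaneously requires monotonicity and the symmetry and transitivity of $\accrel_\agent$ to propagate $\lneg\jbox{t}_\agent \phi$ to accessible points; \posintax{} is the next most delicate, needing transitivity.
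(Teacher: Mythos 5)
Your proposal is correct and follows essentially the same route as the paper: induction on the derivation, direct semantic verification of the propositional and temporal axioms from the truth clauses, and the evidence-closure conditions for the justification axioms, with the strong-evidence property doing the essential work in the negative-introspection case. The only (harmless) divergence is that the paper invokes the strong-evidence condition uniformly to discharge the truth-at-accessible-points conjunct in every justification-axiom case, whereas you verify that conjunct by hand from monotonicity and the reflexivity, symmetry, and transitivity of $\accrel_\agent$ --- thereby correctly isolating \negintax{} as the one case where strong evidence is genuinely indispensable.
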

 \begin{proof}
   We proceed by induction on the derivation of $\phi$. Let $\system$ be a system and $(r,n)$ a point.
   
   If $\phi$ is a propositional axiom or derived using modus ponens, the result follows as usual.
   
   In the case of $\nextkax$, assume $(\system, r, n) \entails \lnext (\phi \limplies \psi)$ and $(\system, r, n) \entails \lnext \phi$.
   Then we have $(\system, r, n+1) \entails \phi \limplies \psi$ and $(\system, r, n+1) \entails \phi$.
   Thus, $(\system, r, n+1) \entails \psi$ and we are done.
   
   In the case of $\alwayskax$, assume $(\system, r, n) \entails \lalways (\phi \limplies \psi)$ and $(\system, r, n) \entails \lalways \phi$.
   Then we have $(\system, r, n+i) \entails \phi \limplies \psi$ and $(\system, r, n+i) \entails \phi$ for all $i \geq 0$.
   Thus, $(\system, r, n+i) \entails \psi$ for all $i \geq 0$ and we are done.
   
   In the case of $\funax$, we have 
   \begin{align*}
     (\system, r, n) \entails \lnext \neg \phi\\
     \text{if and only if }& (\system, r, n+1) \entails \neg \phi\\
     \text{if and only if }& (\system, r, n+1) \not\entails \phi\\
     \text{if and only if }& (\system, r, n) \not\entails \lnext\phi\\
     \text{if and only if }& (\system, r, n) \entails \neg \lnext \phi \, .
   \end{align*}
   
   In the case of $\mixax$, assume $(\system, r, n) \entails \lalways \phi$.
   Then we have $(\system, r, n+i) \entails \phi$ for all $i \geq 0$.
   In particular, we have $(\system, r, n) \entails \phi$.
   Furthermore, we also have $(\system, r, n+1+j) \entails \phi$ for all $j \geq 0$.
   Thus, $(\system, r, n+1) \entails \lalways \phi$, which means $(\system, r, n) \entails \lnext \lalways \phi$ and we are done.
   
   In the case of $\indax$, assume $(\system, r, n) \entails \lalways(\phi \limplies \lnext \phi)$ and $(\system, r, n) \entails \phi$.
   Then we have $(\system, r, n+i) \entails \phi \limplies \lnext \phi)$ for all $i \geq 0$.
   By induction on $i$, we can prove $(\system, r, n+i) \entails \phi$ using $(\system, r, n) \entails \phi$ for the induction basis and $(\system, r, n+i) \entails \phi \limplies \lnext \phi)$ for all $i \geq 0$ for the induction step.
   This yields the desired result.
   
   In the case of $\uoneax$, assume $(\system, r, n) \entails \phi \luntil \psi$.
   Thus we have that there is an $m \geq 0$ such that $(\system, r, n+m) \entails \psi$ and $(\system, r, n+i) \entails \phi$ for all $0 \leq i \leq m$.
   In particular, $(\system, r, n+m) \entails \psi$ and thus $(\system, r, n) \entails \leventually \psi$.
   
   In the case of $\utwoax$, for the direction from left to right, assume $(\system, r, n) \entails \phi \luntil \psi$.
   Thus we have that there is an $m \geq 0$ such that $(\system, r, n+m) \entails \psi$ and $(\system, r, n+i) \entails \phi$ for all $0 \leq i \leq m$.
   If $m=0$, we have $(\system, r, n) \entails \psi$ and we are done.
   If $m>0$, we have $(\system, r, n) \entails \phi$, $(\system, r, n+1+(m-1) ) \entails \psi$ and $(\system, r, n+j$ for all $0 \leq j \leq m-1$.
   Thus $(\system, r, n+1) \entails \phi \luntil \psi$ and, in turn, $(\system, r, n) \entails \lnext (\phi \luntil \psi)$.
   
   In the case of $\utwoax$, for the direction from right to left, assume $(\system, r, n) \entails \psi \lor (\phi \land \lnext(\phi \luntil \psi))$.
   If $(\system, r, n) \entails \psi$, the result follows immediately.
   If $(\system,r ,n) \entails \phi \land \lnext(\phi \luntil \psi)$, we have that there is an $m \geq 0$ such that $(\system, r, n+1+m) \entails \psi$ and $(\system, r, n+1+i) \entails \phi$ for all $0 \leq i \leq m$.
   Thus, $(\system, r, n+(m+1)) \entails \psi$ and $(\system, r, n+i) \entails \phi$ for all $0 \leq i \leq (m+1)$ and we are done.
   
   In the case of $\nextnecrule$, by induction hypothesis we have $\entails_\CS \phi$.
   In particular, this means $(\system, r, n+1) \entails \phi$ and we are done.
   
   In the case of $\alwaysnecrule$, by induction hypothesis we have $\entails_\CS \phi$.
   In particular, this means $(\system, r, n+i) \entails \phi$ for all $i \geq 0$ and we are done.
   
   In the case of $\appax$, assume $(\system, r, n) \entails \jbox{t}_\agent(\phi \limplies \psi)$ and $(\system, r, n) \entails  \jbox{s}_\agent \phi$.
   Thus, we  have $\phi \limplies \psi \in \evidence_\agent(r,n,t)$ and $\phi \in \evidence_\agent(r,n,s)$.
   This gives us $\psi \in \evidence_\agent(r,n, t \tapp s)$ and the result follows from the strong evidence condition.
   
   In the first case of $\sumax$, assume $(\system, r, n) \entails \jbox{t}_\agent \phi$.
   Thus, we have $\phi \in \evidence_\agent(r,n,t) \subseteq \evidence_\agent(r,n,t \tsum s)$.
   This gives us $(\system, r,n) \entails \jbox{t \tsum s}_\agent \phi$ by the strong evidence condition.
   The second case follows analogously.
   
   In the case of $\refax$, assume $(\system, r, n) \entails \jbox{t}_\agent \phi$.
   Thus we have $(\system, r^\prime, n^\prime) \entails \phi i$ for all $(r^\prime,n^\prime)$ with $(r,n) \accrel_\agent (r^\prime, n^\prime)$.
   In particular, $(r,n) \accrel (r,n)$, and therefore $(\system, r, n) \entails \phi $  and we are done.
   
   In the case of $\posintax$, assume $(\system, r, n) \entails \jbox{t}_\agent \phi$.
   Thus we have, $\phi \in \evidence_\agent(r,n,t)$.
   From the closure conditions on evidence functions we get $\jbox{t}_\agent \phi \in \evidence_\agent(r,n,\tinspect t)$.
   The strong evidence condition then gives us the desired $(\system, r, n) \entails \jbox{\tinspect t}_\agent \jbox{t}_\agent \phi$.
   
   In the case of $\negintax$, assume $(\system, r, n) \entails \neg \jbox{t}_\agent \phi$.
   by the strong evidence condition, $\phi \not\in \evidence_\agent(r,n,t)$.
   Thus, $\neg \jbox{t}_\agent \phi \in \evidence_\agent( r, n, \tneginspect t)$.
   Now, strong evidence again gives us $(\system, r, n) \entails \jbox{\tneginspect t}_\agent \neg \jbox{t}_\agent \phi$.
   
   Finally, the case of $\constnecrule$ is immediate by the corresponding closure condition on evidence functions and strong evidence.
 \end{proof}
 
 \begin{lemma}\mbox{}
 \begin{enumerate}
 \item $\alwaysaccessprinciple$ is sound for interpreted systems $\system$ satisfying
 \[
 \text{if } \lalways \phi \in \evidence_\agent(r,n,t) \text{, then } \phi \in \evidence_\agent(r,n+k, \talwaysaccess t) \text{ for all } k \geq 0 \, ,
 \]
 for all points $(r,n)$, agents $\agent$, formulae $\phi$ and terms $t$.
 \item $\generalizeprinciple$ is sound for interpreted systems $\system$ satisfying
 \[
  \text{if } \phi \in \evidence_\agent(r, n+k, t) \text{ for all } k \geq 0 \text{, then } \lalways \phi \in \evidence(r, n, \tgeneralize t) \, ,
 \]
for all points $(r,n)$, agents $\agent$, formulae $\phi$ and terms $t$.
 \item $\nextaccessprinciple$ is sound for interpreted systems $\system$ satisfying
 \[
 \text{if } \lalways \phi \in \evidence_\agent(r,n,t) \text{, then } \lnext \phi \in \evidence_\agent(r, n, \tnextaccess t) \, ,
 \]
 for all points $(r,n)$, agents $\agent$, formulae $\phi$ and terms $t$.
 \item $\nextrightshiftprinciple$ is sound for interpreted systems $\system$ satisfying 
 \[
   \text{if } \lnext \phi \in \evidence_\agent(r,n,t) \text{, then } \evidence_\agent(r, n+1, \tnext t) \, ,
 \]
 for all points $(r,n)$, agents $\agent$, formulae $\phi$ and terms $t$.

 \item $\nextleftshiftprinciple$ is sound for interpreted systems $\system$ satisfying
 \[
   \text{if } \phi \in \evidence_\agent(r, n+1, t) \text{, then } \lnext \phi \in \evidence_\agent(r, n, \tprev t)\, ,
 \]
 for all points $(r,n)$, agents $\agent$, formulae $\phi$ and terms $t$.

 \end{enumerate}
 \end{lemma}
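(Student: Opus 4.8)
The plan is to handle all five items uniformly, by a direct inspection of the truth conditions in exactly the style of the Soundness Theorem above; no induction is involved, since each principle is a single axiom, so it suffices to check that it is valid in every strong interpreted system whose evidence function meets the matching closure condition. Concretely, for a fixed principle I would fix such a system $\system$, an agent $\agent$, a formula $\phi$, a term $t$, and a point $(r,n)$, assume the antecedent holds at $(r,n)$, and derive the consequent there.

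Each of the five arguments then reduces to the same three-step move. First, unfold the antecedent: from $(\system,r,n)\entails\jbox{t}_\agent\psi$ one reads off, in particular, the membership $\psi\in\evidence_\agent(r,n,t)$; if the antecedent is prefixed by $\lalways$ or $\lnext$ (items~2 and~5) one first shifts the point, obtaining such a membership at $(r,n+k)$ for every $k\geq0$, respectively at $(r,n+1)$. Second, apply the stated closure condition to convert this into the corresponding membership for the new term at the appropriate point — for instance, for $\generalizeprinciple$, from $\phi\in\evidence_\agent(r,n+k,t)$ for all $k\geq0$ infer $\lalways\phi\in\evidence_\agent(r,n,\tgeneralize t)$; for $\nextleftshiftprinciple$, from $\phi\in\evidence_\agent(r,n+1,t)$ infer $\lnext\phi\in\evidence_\agent(r,n,\tprev t)$; and for $\alwaysaccessprinciple$ and $\nextrightshiftprinciple$ it is the closure condition itself that pushes the point forward. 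Third, invoke the strong evidence condition to turn the resulting membership back into a validity $(\system,r,m)\entails\jbox{s}_\agent\psi$, and, where the consequent carries an outer $\lalways$ or $\lnext$ (items~1 and~4), reassemble it from the validities obtained at the shifted points. I would then just spell this out for each of the five principles against its own closure condition; for item~4 the condition is evidently meant to read ``if $\lnext\phi\in\evidence_\agent(r,n,t)$, then $\phi\in\evidence_\agent(r,n+1,\tnext t)$''.

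The one point requiring care — and the reason the closure conditions are phrased the way they are — is the second conjunct in the truth condition for $\jbox{s}_\agent\psi$: that $\psi$ hold at \emph{every} point $\accrel_\agent$-related to the current one. A naive attempt to establish the consequent straight from this definition would need $\psi$ to hold at points related to, say, $(r,n+k)$, whose $\agent$-local state need not match the $\agent$-local state of any point reachable from the data in the antecedent, so the antecedent gives no leverage there. The strong evidence condition is exactly what removes this difficulty: it delivers the entire assertion $(\system,r,m)\entails\jbox{s}_\agent\psi$ from the bare membership $\psi\in\evidence_\agent(r,m,s)$, so the related-points obligation never has to be discharged by hand. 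Accordingly, I would read the lemma as a statement about strong interpreted systems satisfying the additional closure conditions, which is also what the definition of $\entails_\CS$ used in the Soundness Theorem requires.
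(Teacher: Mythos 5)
Your proposal is correct and matches the paper's own proof essentially verbatim: each item is discharged by unfolding the antecedent into an evidence-function membership (shifting the point first for items~2 and~5), applying the stated closure condition, and invoking strong evidence to reassemble the consequent. Your reading of the garbled condition in item~4 as ``if $\lnext\phi\in\evidence_\agent(r,n,t)$, then $\phi\in\evidence_\agent(r,n+1,\tnext t)$'' is exactly what the paper's proof of that item uses, and your observation that the lemma must be understood as ranging over \emph{strong} interpreted systems is likewise implicit in the paper's argument.
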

 \begin{proof}
 \begin{enumerate}
 \item Assume $(\system, r, n) \entails \jbox{t}_\agent \lalways \phi$.
          Then $\lalways \phi \in \evidence_\agent(r,n,t)$.
          Thus, $\phi \in \evidence(r, n+k, \talwaysaccess t)$ fo all $k \geq 0$.
          By strong evidence, we get $(\system, r, n+k) \entails \jbox{\talwaysaccess t} \phi$ for all $k \geq 0$ and the result follows.
 \item Assume $(\system, r, n) \entails \lalways \jbox{t}_\agent \phi$.
          Then $(\system, r, n+k) \entails \jbox{t}_\agent \phi$ for all $k \geq 0$.
          Thus, $\phi \in \evidence_\agent(r, n+k, t)$ for all $k \geq 0$.
          Therefore, $\lalways \phi \in \evidence_\agent(r,n, \talwaysaccess t)$ and we obtain the result by strong evidence.
 \item Assume $(\system, r, n) \entails \jbox{t}_\agent \lalways \phi$.
          Then $\lalways \phi \in \evidence_\agent(r, n, t)$.
          Thus, $\lnext \phi \in \evidence_\agent(r,n,\tnextaccess t)$ and we obtain the result by strong evidence.
 \item Assume $(\system, r, n) \entails \jbox{t}_\agent \lnext \phi$.
         Then, $\lnext \phi \in \evidence_\agent(r, n, t)$.
         Thus, $\phi \in \evidence_\agent(r, n+1, \tnext t)$ and we obtain the result by strong evidence.
 \item Assume $(\system, r, n) \entails \lnext \jbox{t}_\agent \phi$.
          Then $\phi \in \evidence_\agent(r, n+1, t)$.
          Thus, $\lnext \phi \in \evidence_\agent(r, n, \tprev t)$ and we obtain the result by strong evidence.
 \end{enumerate}
 \end{proof}
 
 \begin{question}
  Do models satisfying these additional conditions exist at all?
 \end{question}
 
 \begin{question}
 Are there any (less obvious) semantic conditions guaranteeing soundness for these principles?
 \end{question}

 \begin{question}
 How can one show completeness? Adapting the proof from~\cite{HvdMV04}, where interpreted systems are obtained from a (finite) canonical model construction might be a feasible route, but the presence of $\lalways$ might make it more cumbersome. Using infinite canonical models might require some form of model surgery, e.g., filtrations, as we are dealing with fixed points.
 \end{question}
   
 \section{Conclusion}
 \label{sect:Conclusion}
 
 We have sketched an axiomatization for a justification logic of knowledge and time, discussed connecting principles between knowledge and time, illustrated the logic with sample derivations, and shown the internalization theorem and soundness. 
 In the course of the presentation, we have raised questions indicating further directions of work. 
 Most prominently, completeness proofs are currently missing.
 
 Besides the questions posed above, there are various further routes of research such a logic might open.
 We will outline these questions in the following, in no particular order.
 
 \begin{question}
 Can one build Mkrtychev-style~\cite{Mkr97LFCS} interpreted systems? These would be models which do not require the accessibility relation $\accrel_\agent$, but solely rely on the evidence function for determining knowledge.
 \end{question}
 
 \begin{question}
 How can the typical examples, e.g., protocols related to message transmission, be formalized in the logic presented above? See, e.g.,~\cite{HZ92}. For example, one might consider principles such as
 \begin{itemize}
   \item $\jbox{t}_\agent \phi \limplies \lnext \jbox{\textsf{sent}^\agent_j(t)}_j \phi$, or
  \item $\jbox{t}_\agent \phi \limplies \leventually \jbox{\textsf{sent}^\agent_j(t)}_j \phi$.
 \end{itemize}
 \end{question}
 
 \begin{question}
 What happens if we require operations on justification terms to take time, e.g.,
 \[
   \jbox{t}_\agent \phi \limplies \lnext \jbox{\tinspect t}_\agent \jbox{t}_\agent \phi \, ?
 \]
 This might also relate to the logical omniscience problem~\cite{ArtKuz14APAL}.
 \end{question}
 
 \begin{question}
 What does a justification logic for knowledge and \emph{branching} time look like? See also~\cite{vdMW03}.
 \end{question}
 
 \begin{question}
 Can dynamic justification logic be translated into temporal justification logic akin to~\cite{vDvdHR13}? See also~\cite{BucKuzStu14}.
 \end{question}
 
 Finally, one might also wonder whether the temporal modalities themselves can be justified.
 This question is independent of the presentation above, more information can be found in Section~\ref{sect:AJustifiedTemporalLogic}.
 
 \begin{question}
 What would a justified temporal logic look like?
 \end{question}
 
 \bibliographystyle{alpha}
 \bibliography{bibliography,JLBibliography}
 
 \newpage
 
 \appendix
 
 \section{An Alternative Presentation of Temporal Logic}
 \label{sect:AnAlternativePresentationOfTemporalLogic}
 
 Linear temporal logic is often presented using an induction rule as follows, see, e.g.,~\cite{HvdMV04}, with axioms
 \begin{itemize}
  \item $\lnext( \phi \limplies \psi) \limplies (\lnext \phi \limplies \lnext \psi)$ \hfill \nextkax
  \item $\lnext \lneg \phi \liff \lneg \lnext \phi$ \hfill \funax
  \item $\phi \luntil \psi \liff \psi \lor (\phi \land \lnext(\phi \luntil \psi))$ \hfill \utwoax
 \end{itemize}
 and rules
 \[
  \lrule{\phi}{\lnext \phi}\,\nextnecrule \, ,\qquad\qquad \lrule{\chi \limplies \lneg \psi \land \lnext \chi}{\chi \limplies \lneg(\phi \luntil \psi)}\,\uindrule \, .
 \]
 Here, $\leventually$ and $\lalways$ are defined by
 \begin{align*}
  \leventually \phi &\colonequals \ltrue \luntil \phi \, ,\\
  \lalways \phi &\colonequals \lneg \leventually \lneg \phi \, .
 \end{align*}
 
 We use $\LTLalt$ to denote the Hilbert system given by the alternative axioms and rules for temporal logic above, plus the axioms and rules for propositional logic.
 
 \subsection{Relationship between Temporal Logic and Alternative Presentation of Temporal Logic}
 
 We will now show that this presentation is equivalent to the one previously given. We start by showing some auxiliary results.
 
 \begin{lemma}
  The following rules are derivable in $\LTL$:
  \begin{enumerate}
   \item \label{lem:temptoalt:nexttoalways1} $\lrule{\phi \limplies \lnext \phi}{\phi \limplies \lalways \phi}$,
   \item \label{lem:temptoalt:nexttoalways2} $\lrule{\chi \limplies \phi \land \lnext \chi}{\chi \limplies \lalways \phi}$,
   \item \label{lem:temptoalt:alwaystountil} $\lrule{\chi \limplies \lalways \lneg \psi}{\chi \limplies \lneg ( \phi \luntil \psi )}$.
  \end{enumerate}
 \end{lemma}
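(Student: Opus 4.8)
The plan is to treat the three rules in order, each time reducing to the temporal axioms $\indax$, $\alwayskax$, $\uoneax$ together with the necessitation rule $\alwaysnecrule$ and propositional reasoning; the later items will reuse the earlier ones.

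For item~\ref{lem:temptoalt:nexttoalways1}, I would apply $\alwaysnecrule$ to the premise $\phi \limplies \lnext \phi$ to obtain $\lalways(\phi \limplies \lnext \phi)$, and then combine this by modus ponens with the instance $\lalways(\phi \limplies \lnext \phi) \limplies (\phi \limplies \lalways \phi)$ of $\indax$, yielding $\phi \limplies \lalways \phi$ as required. For item~\ref{lem:temptoalt:nexttoalways2}, I would first split the premise $\chi \limplies \phi \land \lnext \chi$ propositionally into $\chi \limplies \phi$ and $\chi \limplies \lnext \chi$. From $\chi \limplies \lnext \chi$ and item~\ref{lem:temptoalt:nexttoalways1} I get $\chi \limplies \lalways \chi$. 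Applying $\alwaysnecrule$ to $\chi \limplies \phi$ and then $\alwayskax$ with modus ponens gives $\lalways \chi \limplies \lalways \phi$, so chaining the two implications gives $\chi \limplies \lalways \phi$.

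For item~\ref{lem:temptoalt:alwaystountil}, I would unfold the abbreviation $\leventually$, so that $\uoneax$ reads $\phi \luntil \psi \limplies \lneg \lalways \lneg \psi$; contraposing (and eliminating the double negation, which is propositional) yields $\lalways \lneg \psi \limplies \lneg(\phi \luntil \psi)$, and composing this with the premise $\chi \limplies \lalways \lneg \psi$ gives the desired $\chi \limplies \lneg(\phi \luntil \psi)$.

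I do not expect a genuine obstacle here; each item is a short derivation. The only mild point of care is in item~\ref{lem:temptoalt:nexttoalways2}, where one must route the argument through item~\ref{lem:temptoalt:nexttoalways1} (equivalently, through $\indax$ applied to $\chi$) rather than trying to apply $\indax$ to $\phi$ directly, since the premise does not by itself give $\lalways(\phi \limplies \lnext \phi)$.
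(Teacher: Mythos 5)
Your proposal is correct and follows essentially the same route as the paper: item~1 via $\alwaysnecrule$ plus $\indax$, item~2 by splitting the premise, reusing item~1 for $\chi \limplies \lalways \chi$, and combining with $\lalways\chi \limplies \lalways\phi$ obtained from $\alwaysnecrule$ and $\alwayskax$, and item~3 by contraposing $\uoneax$. No gaps.
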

 \begin{proof}
 \begin{enumerate}
  \item Assume $\phi \limplies \lnext\phi$. 
        By $\alwaysnecrule$ we obtain $\lalways(\phi \limplies \lnext \phi)$.
        Using $\indax$ and $\mprule$, we get the desired $\phi \limplies \lalways \phi$.
  \item Assume $\chi \limplies \phi \land \lnext \chi$.
        By propositional reasoning, we get both
        \begin{align}
         \chi &\limplies \lnext \chi \label{eq:chiimpliesnextchi}\\
         \chi &\limplies \phi \label{eq:chiimpliesphi}
        \end{align}
        From~\eqref{eq:chiimpliesnextchi} we get 
        \begin{equation}
         \chi \limplies \lalways \chi \label{eq:chiimpliesalwayschi}
        \end{equation}
        by using Lemma~\ref{lem:temptoalt}, item~\ref{lem:temptoalt:nexttoalways1} above.
        
        From~\eqref{eq:chiimpliesphi} we get $\lalways(\chi \limplies \phi)$ by $\alwaysnecrule$ and from this we obtain
        \begin{equation}
         \lalways \chi \limplies \lalways \phi \label{eq:alwayschiimpliesalwaysphi}
        \end{equation}
        by using $\alwayskax$ and $\mprule$.
        
        Using propositional reasoning we can combine \eqref{eq:chiimpliesalwayschi}~and~\eqref{eq:alwayschiimpliesalwaysphi} in order to obtain the desired
        \[
         \chi \limplies \lalways \phi \, .
        \]
  \item Immediate by using propositional reasoning and the contrapositive of $\uoneax$ which is $\lalways \lneg \psi \limplies \lneg (\phi \luntil \psi)$.\qedhere
 \end{enumerate}
 \end{proof}
 
 Combining these results, we obtain the following:
 
 \begin{lemma}
   \label{lem:temptoalt}
  The rule~$\uindrule$ is derivable in $\LTL$.
 \end{lemma}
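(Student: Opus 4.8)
The plan is to derive $\uindrule$ simply by chaining together items~\ref{lem:temptoalt:nexttoalways2} and~\ref{lem:temptoalt:alwaystountil} of the preceding lemma. Suppose we are given the premise of $\uindrule$, namely
\[
  \chi \limplies \lneg \psi \land \lnext \chi \, .
\]
This is precisely the premise of item~\ref{lem:temptoalt:nexttoalways2} with the schematic formula there instantiated to $\lneg \psi$ (the instantiation is harmless, since that item holds for an arbitrary formula in that position, and $\phi$ in the current statement plays no role yet).

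Applying item~\ref{lem:temptoalt:nexttoalways2} we therefore obtain
\[
  \chi \limplies \lalways \lneg \psi \, .
\]
This is now exactly the premise of item~\ref{lem:temptoalt:alwaystountil}, where the left-hand side $\phi$ of the until is left completely free. Applying item~\ref{lem:temptoalt:alwaystountil} thus yields
\[
  \chi \limplies \lneg (\phi \luntil \psi) \, ,
\]
which is the conclusion of $\uindrule$. Hence the rule is derivable in $\LTL$.

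There is essentially no hard step here: the work has already been done in the auxiliary lemma, and the only thing to check is that the substitution instances line up, which they do because item~\ref{lem:temptoalt:nexttoalways2} is schematic in the formula occupying the $\phi$-slot and item~\ref{lem:temptoalt:alwaystountil} is schematic in the left until-argument. If anything deserves a word of care, it is making explicit that item~\ref{lem:temptoalt:nexttoalways2} is being used with $\lneg\psi$ in place of $\phi$, so that no confusion arises with the $\phi$ appearing in $\phi \luntil \psi$.
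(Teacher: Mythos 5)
Your proof is correct and coincides with the paper's own argument: it chains item~\ref{lem:temptoalt:nexttoalways2} (instantiated with $\lneg\psi$) and item~\ref{lem:temptoalt:alwaystountil} of the auxiliary lemma in exactly the same way. The added remark about which substitution instance of item~\ref{lem:temptoalt:nexttoalways2} is being used is a small clarity improvement over the paper's version, but the route is identical.
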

 \begin{proof}
  Assume $\chi \limplies \lneg \psi \land \lnext \chi$.
  By Lemma \ref{lem:temptoalt}, item~\ref{lem:temptoalt:nexttoalways2} we get $\chi \limplies \lalways \lneg \psi$.
  Using Lemma \ref{lem:temptoalt}, item~\ref{lem:temptoalt:alwaystountil} we get $\chi \limplies \lneg(\phi \luntil \psi)$ and we are done.
 \end{proof}

For the other direction, we have:

 \begin{lemma}
  \label{lem:alttotemp}
  The following axioms and rules are derivable in $\LTLalt$:
  \begin{enumerate}
   \item \label{lem:alttotemp:mixax} $\mixax$
   \item $\uoneax$
   \item $\alwaysnecrule$
   \item $\alwayskax$
   \item $\indax$
  \end{enumerate}
 \end{lemma}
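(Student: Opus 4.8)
The plan is to carry out all five derivations inside $\LTLalt$, using the defined forms $\leventually\phi = \ltrue\luntil\phi$ and $\lalways\phi = \lneg(\ltrue\luntil\lneg\phi)$, and to reduce each item (apart from $\mixax$, which is immediate) to one application of $\uindrule$ with a well-chosen premise formula. First I would record two fixpoint equivalences. Instantiating $\utwoax$ with $\ltrue$ as the left argument of $\luntil$ and simplifying $\ltrue\land\chi\liff\chi$ yields
\[
  \leventually\psi \liff \psi\lor\lnext\leventually\psi \, ;
\]
instantiating $\utwoax$ with $\lneg\phi$ as the right argument, negating both sides, and using $\funax$ together with double-negation elimination yields
\[
  \lalways\phi \liff \phi\land\lnext\lalways\phi \, .
\]
The left-to-right direction of the second equivalence is exactly $\mixax$, settling item~1. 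I would also note the routine fact that $\lnext$ commutes with $\land$, i.e.\ $\lnext(\alpha\land\beta)\liff\lnext\alpha\land\lnext\beta$, which follows from $\nextkax$, $\funax$, $\nextnecrule$ and propositional logic; this is needed in items~4 and~5.

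For $\uoneax$ (item~2) I would prove the contrapositive $\lneg\leventually\psi\limplies\lneg(\phi\luntil\psi)$ by applying $\uindrule$, with the same $\phi,\psi$ as in the statement, to the premise formula $\lneg\leventually\psi$: negating the $\leventually$-fixpoint equivalence and applying $\funax$ shows $\lneg\leventually\psi\limplies\lneg\psi\land\lnext\lneg\leventually\psi$, so $\uindrule$ delivers $\lneg\leventually\psi\limplies\lneg(\phi\luntil\psi)$, and contraposition with double-negation elimination gives $\phi\luntil\psi\limplies\leventually\psi$. For $\alwaysnecrule$ (item~3), given a theorem $\phi$, I would apply $\uindrule$ to the premise formula $\ltrue$, with $\ltrue$ and $\lneg\phi$ as the two arguments of $\luntil$: since $\phi$ is a theorem so is $\lneg\lneg\phi$, and $\lnext\ltrue$ follows from $\nextnecrule$, so the premise $\ltrue\limplies\lneg\lneg\phi\land\lnext\ltrue$ holds; the conclusion $\ltrue\limplies\lneg(\ltrue\luntil\lneg\phi)$ is $\ltrue\limplies\lalways\phi$, and $\mprule$ yields $\lalways\phi$.

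For $\alwayskax$ (item~4) and $\indax$ (item~5) I would again use $\uindrule$, now with a conjunctive premise formula, leaning on $\mixax$ from item~1. For $\alwayskax$, take the premise formula $\chi = \lalways(\phi\limplies\psi)\land\lalways\phi$: $\mixax$ gives $\chi\limplies(\phi\limplies\psi)\land\phi$, hence $\chi\limplies\psi$, and also $\chi\limplies\lnext\lalways(\phi\limplies\psi)\land\lnext\lalways\phi$, which by the $\lnext$/$\land$ commutation is $\chi\limplies\lnext\chi$; with $\ltrue$ and $\lneg\psi$ as the arguments of $\luntil$, $\uindrule$ yields $\chi\limplies\lalways\psi$, which rearranges propositionally to $\alwayskax$. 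For $\indax$, take $\chi = \phi\land\lalways(\phi\limplies\lnext\phi)$: applying $\mixax$ to $\phi\limplies\lnext\phi$ gives both $\lalways(\phi\limplies\lnext\phi)\limplies(\phi\limplies\lnext\phi)$ and $\lalways(\phi\limplies\lnext\phi)\limplies\lnext\lalways(\phi\limplies\lnext\phi)$, so $\chi\limplies\phi$, $\chi\limplies\lnext\phi$, and $\chi\limplies\lnext\lalways(\phi\limplies\lnext\phi)$, whence $\chi\limplies\phi\land\lnext\chi$; $\uindrule$ then gives $\chi\limplies\lalways\phi$, i.e.\ $\lalways(\phi\limplies\lnext\phi)\limplies(\phi\limplies\lalways\phi)$.

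I expect the main obstacle to be bookkeeping rather than anything conceptual: choosing the right premise formula $\chi$ for each use of $\uindrule$ — essentially an invariant — and then keeping careful track of the double negations forced by the definition $\lalways\phi = \lneg(\ltrue\luntil\lneg\phi)$ in concert with the $\funax$-driven commutations of $\lnext$ past negation and conjunction. Once the two fixpoint equivalences and the $\lnext$/$\land$ commutation are available, each of the five items is a short propositional rearrangement plus a single application of $\uindrule$ (or, for $\mixax$, nothing further).
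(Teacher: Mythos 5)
Your proposal is correct and matches the paper's own proof essentially step for step: the same fixpoint manipulation of $\utwoax$ (via negation and $\funax$) for $\mixax$, and the same invariant formulae $\chi$ fed to $\uindrule$ for each of the remaining items, namely $\lneg\leventually\psi$ for $\uoneax$, $\ltrue$ for $\alwaysnecrule$, $\lalways(\phi\limplies\psi)\land\lalways\phi$ for $\alwayskax$, and $\phi\land\lalways(\phi\limplies\lnext\phi)$ for $\indax$. The only cosmetic difference is that you isolate the $\lnext$/$\land$ commutation as a separate auxiliary fact where the paper invokes $\nextkax$ inline.
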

 \begin{proof}
  Remember that $\lalways$ is a defined connective in $\LTLalt$, i.e., $\lalways \phi \colonequals \lneg( \ltrue \luntil \lneg \phi)$.
  \begin{enumerate}
    \item The following is an instance of $\utwoax$
          \[
           \lneg \phi \lor (\ltrue \land \lnext( \ltrue \luntil \lneg \phi)) \limplies \ltrue \luntil \lneg \phi \, .
          \]
          Using propositional reasoning, this is equivalent to
          \[
           \lneg \phi \lor \lnext( \ltrue \luntil \lneg \phi)) \limplies \ltrue \luntil \lneg \phi \, .
          \]
          Taking the contrapositive of this and using propositional reasoning and $\funax$, we obtain
          \[
           \lneg(\ltrue \luntil \lneg \phi) \limplies (\phi \land \lnext \lneg (\ltrue \luntil \lneg \phi)) \, ,
          \]
          which is the desired result
          \[
           \lalways \phi \limplies (\phi \land \lnext \lalways \phi \, .
          \]
    \item The following is an instance of $\uoneax$
          \[
           \psi \lor (\ltrue \land \lnext(\ltrue \luntil \psi)) \limplies \ltrue \luntil \psi \, .
          \]
          Taking the contrapositive, using propositional reasoning and $\funax$, we obtain
          \[
           \lneg(\ltrue \luntil \psi) \limplies \lneg \psi \land \lnext \lneg(\ltrue \luntil \psi) \, .
          \]
          Now we can use $\uindrule$ in order to obtain
          \[
           \lneg(\ltrue \luntil \psi) \limplies \lneg (\phi \luntil \psi) \, ,
          \]
          whose contrapositive
          \[
           \phi \luntil \psi \limplies \ltrue \luntil \psi
          \]
          is the desired
          \[
           \phi \luntil \psi \limplies \leventually \psi \, .
          \]
    \item Assume $\phi$. By propositional reasoning and $\nextnecrule$ we get
          \[
           \ltrue \limplies \lneg \lneg \phi \land \lnext \ltrue \, .
          \]
          Using $\uindrule$, we get
          \[
           \ltrue \limplies \lneg(\ltrue \luntil \lneg \phi ) \, ,
          \]
          which is equivalent to
          \[
           \lneg(\ltrue \luntil \lneg \phi) \, ,
          \]
          which is the desired
          \[
           \lalways \phi \, .
          \]
    \item By Lemma~\ref{lem:alttotemp}, item~\ref{lem:alttotemp:mixax}, the following instances of $\mixax$ are provable
          \begin{align*}
           \lalways \phi &\limplies (\phi \land \lnext \lalways \phi) \, ,\\
           \lalways (\phi \limplies \psi) &\limplies ( (\phi \limplies \psi) \land \lnext\lalways(\phi \limplies \psi) \, .
          \end{align*}
          Using propositional reasoning to combine these, we obtain
          \[
           (\lalways \phi \land \lalways (\phi \limplies \psi)) \limplies (\phi \land (\phi \limplies \psi) \land \lnext \lalways \phi \land \lnext\lalways(\phi \limplies \psi)) \, .
          \]
          Using propositional reasoning and $\nextkax$, we get
          \[
           (\lalways \phi \land \lalways (\phi \limplies \psi)) \limplies (\psi \land \lnext (\lalways \phi \land \lalways(\phi \limplies \psi)) \, .
          \]
          Note that this has the form
          \[
           \chi \limplies (\psi \land \lnext \chi )\, ,
          \]
          where $\chi = \lalways \phi \land \lalways (\phi \limplies \psi)$.
          Now we can use $\uindrule$ in order to obtain
          \[
           \chi \limplies \lneg (\ltrue \luntil \lneg \psi) \, ,
          \]
          which is
          \[
           (\lalways \phi \land \lalways(\phi \limplies \psi)) \limplies \lalways \psi \, ,
          \]
          which in turn is propositionally equivalent to the desired $\alwayskax$.
    \item Using Lemma~\ref{lem:alttotemp}, item~\ref{lem:alttotemp:mixax} as above and propositional reasoning we have
          \[
           \phi \land \lalways (\phi \limplies \lnext \phi) \limplies ( \phi \land (\phi \limplies \lnext \phi) \land \lnext\lalways(\phi \limplies \lnext \phi))\, ,
          \]
          which, by using further propositional reasoning and $\nextkax$ can be turned into
          \[
           \phi \land \lalways (\phi \limplies \lnext \phi) \limplies ( \phi \land \lnext (\phi \land \lalways(\phi \limplies \lnext \phi))\, .
          \]
          This has the form
          \[
           \chi \limplies \phi \land \lnext \chi \, ,
          \]
          where $\chi = \phi \land \lalways (\phi \limplies \lnext \phi)$.
          Hence we can use $\uindrule$ in order to obtain
          \[
           \chi \limplies \lneg (\ltrue \luntil \lneg \phi) \, ,
          \]
          which is 
          \[
           \phi \land \lalways(\phi \limplies \lnext \phi) \limplies \lalways \phi \, ,
          \]
          which is trivially equivalent to the desired $\indax$.
          \qedhere
  \end{enumerate}
 \end{proof}

Finally, putting everything together, we obtain the desired equivalence.

 \begin{theorem}
  $\LTL \proves \phi$ if and only if $\LTLalt \proves \phi$.
 \end{theorem}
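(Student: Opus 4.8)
The plan is to prove the two directions separately, each by induction on the length of a derivation, reducing everything to the two preparatory lemmas just established. For the implication from $\LTLalt$ to $\LTL$ I would induct on an $\LTLalt$-derivation of $\phi$: the propositional axioms and $\mprule$, as well as $\nextkax$, $\funax$, $\utwoax$ and $\nextnecrule$, are literally axioms and rules of $\LTL$, so those steps carry over verbatim; the only genuinely new case is an application of $\uindrule$, which is exactly what Lemma~\ref{lem:temptoalt} supplies — from an $\LTL$-proof of the premise $\chi \limplies \lneg \psi \land \lnext \chi$, available by the induction hypothesis, we obtain one of the conclusion $\chi \limplies \lneg(\phi \luntil \psi)$.

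For the converse, from $\LTL$ to $\LTLalt$, I would again induct on an $\LTL$-derivation. The shared axioms and rules ($\propax$, $\mprule$, $\nextkax$, $\funax$, $\utwoax$, $\nextnecrule$) are handled trivially, and the five axioms and rules of $\LTL$ that are not present in $\LTLalt$ — namely $\mixax$, $\uoneax$, $\alwayskax$, $\indax$ and $\alwaysnecrule$ — are precisely the five items of Lemma~\ref{lem:alttotemp}, so each such step of the derivation can be replayed inside $\LTLalt$. Composing the two inductions yields $\LTL \proves \phi$ if and only if $\LTLalt \proves \phi$.

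The one point demanding care, and what I would flag as the real (if modest) obstacle, is that $\lalways$ is a primitive connective of $\LTL$ but a mere abbreviation in $\LTLalt$, while $\leventually$ is introduced by different abbreviations in the two systems. To make the statement literally meaningful one fixes the obvious translation unfolding every $\lalways \psi$ as $\lneg(\ltrue \luntil \lneg \psi)$, and checks once and for all that it is provably transparent, that is, that $\LTL \proves \leventually \psi \liff (\ltrue \luntil \psi)$ (in $\LTLalt$ this is the definition). One direction of this is the instance of $\uoneax$ with $\ltrue$ in place of $\phi$; for the other, the instance of $\utwoax$ with $\ltrue$ in place of $\phi$ gives, after taking contrapositives, both $\lneg(\ltrue \luntil \psi) \limplies \lneg \psi$ and $\lneg(\ltrue \luntil \psi) \limplies \lnext \lneg(\ltrue \luntil \psi)$, and feeding their conjunction into the derivable rule that lifts $\chi \limplies \phi \land \lnext \chi$ to $\chi \limplies \lalways \phi$ yields $\lneg(\ltrue \luntil \psi) \limplies \lalways \lneg \psi$, whose contrapositive is the missing implication. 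With this equivalence the translation commutes with all defined connectives, so the two inductions genuinely transport theorems between the systems and the theorem follows.
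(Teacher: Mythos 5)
Your proof takes essentially the same route as the paper's, which is a one-line ``immediate by induction on the derivation'' appeal to the two preparatory lemmas: the direction from $\LTLalt$ to $\LTL$ reduces to the derivability of $\uindrule$ in $\LTL$, and the converse to the derivability of $\mixax$, $\uoneax$, $\alwaysnecrule$, $\alwayskax$ and $\indax$ in $\LTLalt$. The only difference is that you make explicit a point the paper silently elides --- that $\lalways$ is primitive in $\LTL$ but an abbreviation in $\LTLalt$, so one must fix a translation and verify $\LTL \proves \leventually\psi \liff \ltrue\luntil\psi$ --- and your treatment of that point is correct and a genuine improvement in rigour rather than a change of method.
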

 \begin{proof}
  Immediate by induction on the derivation and using Lemma~\ref{lem:alttotemp} and Lemma~\ref{lem:temptoalt}.
 \end{proof}
 
 \newpage
 
 \section{Some Connecting Principles in Temporal Modal Logic}
 \label{sect:SomeConnectingPrinciplesInTemporalModalLogic}
 
 The following definitions and facts are taken from~\cite{HvdMV04}. Here we adapt the language and use $\lknows_\agent \phi$ for the modality ``agent~$\agent$ knows $\phi$'' and its dual $\lconsiders_\agent \phi \colonequals \lneg \lknows_\agent \lneg \phi$. We first need the following preliminary definitions.

 \begin{definition}
 For agent~$\agent$ and point $(r,n)$, we define
 \begin{enumerate}
  \item the \emph{local state sequence} $\localstatesequence_\agent(r,n)$ to be the sequence of local states of agent~$\agent$ in run $r$ up to and including time~$n$, but with consecutive repetitions omitted,
  \item the \emph{future local state sequence} $\futurelocalstatesequence_\agent(r,n)$ to be the sequence of local states of agent~$\agent$ in run $r$ starting from time~$n$, but with consecutive repetitions omitted.
  \end{enumerate}
 \end{definition}
 Using these definitions, we can define the following notions.
 \begin{definition}\mbox{}
  \begin{enumerate}
   \item A system has a \emph{unique intial state} if for all runs $r,r^\prime \in \runs$
         \[
          r(0) = r^\prime(0) \, .
         \]

   \item A system is \emph{synchronous} if for all agents $\agent$, points $(r,n)$, and $(r^\prime, n^\prime)$
         \[
          (r,n) \accrel_\agent (r^\prime,n^\prime) \text{ implies } n = n^\prime \, ,
         \]
         i.e., agents do know the (global) time.
   \item Agent~$\agent$ \emph{has perfect recall}  in a system, if for all points $(r,n)$ and $(r^\prime, n^\prime)$
            \[
            (r,n) \accrel_\agent (r^\prime, n^\prime) \text{ implies }\localstatesequence_\agent(r,n) =  \localstatesequence_\agent(r^\prime, n^\prime)\, ,
            \]
            i.e., if the agent considers $r^\prime$ possible, it must have considered it possible at all points in the past.
   \item Agent~$\agent$ \emph{does not learn}  in a system, if for all points $(r,n)$ and $(r^\prime, n^\prime)$
            \[
            (r,n) \accrel_\agent (r^\prime, n^\prime) \text{ implies }\futurelocalstatesequence_\agent(r,n) =  \futurelocalstatesequence_\agent(r^\prime, n^\prime)\, ,
            \]
            i.e., if the agent considers $r^\prime$ possible, it will do so at all points in the future.
     \end{enumerate}
 \end{definition}
 Corresponding to these semantic notions, we have the following principles
 \begin{definition}\mbox{}
 \begin{itemize}
  \item[(KT3)] $\lknows_\agent \phi \land \lnext( \lknows_\agent \psi \land \lneg \lknows_\agent \chi)\limplies \lconsiders_\agent ( (\lknows_\agent \phi) \luntil ((\lknows_\agent \psi) \luntil \lneg \chi ) ) $ \hfill \prax
  \item[(KT1)] $\lknows_\agent \lalways \phi \limplies \lalways \lknows_\agent \phi$ \hfill \notquiteprax
  \item[(KT2)] $\lknows_\agent \lnext \phi \limplies \lnext \lknows_\agent \phi$ \hfill \prsyncax
  \item[(KT4)] $\lknows_\agent \phi \luntil \lknows_\agent \psi \limplies \lknows_\agent(\lknows_\agent  \phi \luntil \lknows_\agent \psi)$ \hfill \nlax
  \item[(KT5)] $\lnext \lknows_\agent \phi \limplies \lknows_\agent \lnext \phi$ \hfill \nlsyncax
  \item[] $\lknows_\agent \phi \liff \lknows_1 \phi$ \hfill \knowexchax
 \end{itemize}
 \end{definition}
 Finally, the following relationships hold between these principles and semantic notions:
 \begin{itemize}
  \item $\prax$ (strictly) implies $\notquiteprax$,
  \item $\prax$ gives a sound and complete axiomatization for systems with \emph{perfect recall} (with or without \emph{unique initial state}),
  \item $\prsyncax$ gives a sound and complete axiomatization for \emph{synchronous} systems with \emph{perfect recall} (with or without \emph{unique initial state}),
  \item $\nlax$ gives a sound and complete axiomatization for systems with \emph{no learning} (without \emph{unique initial state}),
  \item $\nlsyncax$ gives a sound and complete axiomatization for \emph{synchronous} systems with \emph{no learning} without \emph{unique initial state},
  \item $\prax$ and $\nlax$ give a sound and complete axiomatization for systems with \emph{perfect recall} and \emph{no learning}  \emph{without unique initial state},
  \item $\prax$ and $\nlax$ give a sound and complete axiomatization for \emph{single-agent} (i.e., $\numberofagents = 1$) systems with \emph{perfect recall} and \emph{no learning}  \emph{with unique initial state}),
  \item $\prsyncax$ and $\nlsyncax$ give a sound and complete axiomatization for \emph{synchronous} systems with \emph{perfect recall} and \emph{no learning} \emph{without unique initial state},
  \item $\prsyncax$ and $\nlsyncax$ and $\knowexchax$ give a sound and complete axiomatization for \emph{synchronous} systems with \emph{no learning} and \emph{with unique initial state} (with or without \emph{perfect recall}),
  \item Systems with \emph{no learning} and \emph{with unique initial state} with {more than one agent} (i.e., $\numberofagents \geq 2$) do not have a recursive axiomatic characterization since the validity problem is co-r.e.-complete.
 \end{itemize}

 \newpage
 
 \section{A Justified Temporal Logic?}
 \label{sect:AJustifiedTemporalLogic}
 
 Looking at $\LTL$, one might also be tempted to create a justified temporal logic, where we introduce justifications for temporal modalities. 
 This might look as follows.
 
 \subsection{Syntax}
 
  We have three different types of terms, corresponding to the three different temporal modalities:
 \begin{align*}
  t^\lnext &\coloncolonequals c^\lnext \mid x^\lnext \mid t^\lnext \tapp t^\lnext \mid \tahead(t^\lalways) \mid \tuhead(t^\luntil) \, ,\\
  t^\lalways &\coloncolonequals c^\lalways \mid x^\lalways \mid \taind(t^\lalways,t^\lnext) \mid \tatail(t^\lalways)  \, ,\\
  t^\luntil &\coloncolonequals c^\luntil \mid x^\luntil \mid \tuappend(t^\lnext, t^\luntil) \mid \tutail(t^\luntil) \, .
 \end{align*}
 
 Formulae are as usual, but with modalities replaced by terms:
 \[
  \phi \coloncolonequals p \mid \lfalse \mid \phi \limplies \phi \mid \jbox{t^\lnext}_\lnext \phi \mid \jbox{t^\lalways}_\lalways \phi \mid \jbox{t^\luntil}_\luntil(\phi,\phi) \, .
 \]
 
 \subsection{Axiomatisation}
 
 \begin{enumerate}
  \setcounter{enumi}{-1}
  \item all propositional tautologies
  \item $\jbox{t}_\lnext( \phi \limplies \psi) \limplies (\jbox{s}_\lnext \phi \limplies \jbox{t \tapp s}_\lnext \psi)$ \hfill \nextappax
  \item $\jbox{t}_\lalways( \phi \limplies \psi) \limplies (\jbox{s}_\lalways \phi \limplies \jbox{t \tapp s}_\lalways \psi)$ \hfill \alwaysappax
  \item $\jbox{t}_\lnext \lneg \phi \liff \lneg \jbox{t}_\lnext \phi$ \hfill \funax
  \item $\jbox{t}_\lalways \phi \limplies (\phi \land \jbox{\tahead(t)}_\lnext\jbox{\tatail(t)}_\lalways\phi)$\hfill \amixax
  \item $\jbox{t}_\lalways (\phi \limplies \jbox{s}_\lnext \phi) \limplies (\phi \limplies \jbox{\taind(t,s)}_\lalways \phi)$ \hfill \aindax
  \item $\jbox{t,s}_\luntil(\phi,\psi) \limplies \neg \jbox{s}_\lalways \neg \psi$ \hfill \uax
  \item $\psi \lor (\phi \land \jbox{t_1}_\lnext(\jbox{t_2}_\luntil(\phi, \psi))) \limplies \jbox{\tuappend(t_1,t_2),s}_\luntil(\phi)$ \hfill \uindax 
  \item $\jbox{t}_\luntil(\phi, \psi) \limplies \psi \lor (\phi \land \jbox{\tuhead(t)}_\lnext(\jbox{\tutail(t)}_\luntil(\phi, \psi)))$ \hfill \umixax
 \end{enumerate}
 and rules
 \[
  \lrule{\phi \quad \phi \limplies \psi}{\psi}\,\mprule \, , \qquad\qquad \lrule{\jbox{c}_\lnext \phi \in \CS }{\jbox{c}_\lnext \phi}\,\nextnecrule \, , \qquad\qquad \lrule{\jbox{c}_\lalways \phi \in \CS}{\jbox{c}_\lalways \phi}\,\alwaysnecrule \, .
 \]
 
 Roughly speaking, one could consider justifications for $\lalways$ and $\luntil$ as lists of terms, that are either generated by induction (in the case of $\lalways$) or by appending elements to a given list (in the case of $\luntil$).
 
 However, this is just a preliminary sketch and it is not clear, what the actual semantics of such a system would be.
 Whether and what sense such a system makes has to be left for future work.

\end{document}